\def\CommentVersion{}
\pgfplotsset{compat=1.13}
\tikzset{fontscale/.style = {font=\relsize{#1}}}    
  \newcommand{\marginX}{\marginnote{\huge{\quad\quad\textbf{!}\quad\quad}}}
  \newcommand{\he}[1]{{\color{green}\marginX{}[\textbf{Houssam}: #1]}}
  \newcommand{\gl}[1]{{\color{magenta}\marginX{}[\textbf{Giuseppe}: #1]}}
  \newcommand{\ro}[1]{{\color{blue}\marginX{}[\textbf{Richard}: #1]}}
  \newcommand{\deleted}[1]{\st{#1}}
  \newcommand{\he}[1]{}
  \newcommand{\gl}[1]{}
  \newcommand{\ro}[1]{}
  \newcommand{\deleted}[1]{}
  \newcommand{\paperOnly}[1]{}
  \newcommand{\reportOnly}[1]{#1}
  \newcommand{\paperOnly}[1]{#1}
  \newcommand{\reportOnly}[1]{}
\newcommand{\taskset}[1]{\mathcal{T}_{#1}}
\newcommand{\task}[1]{\tau_{#1}}
\newcommand{\deadline}[1]{\mathsf{D}_{#1}}
\newcommand{\period}[1]{\mathsf{T}_{#1}}
\newcommand{\job}[2]{\mathcal{J}_{#1,#2}}
\newcommand{\arrival}[2]{\mathsf{a}_{#1,#2}}
\newcommand{\server}[1]{\mathcal{S}_{#1}}
\newcommand{\serveru}[1]{\mathsf{u}_{#1}}
\newcommand{\budget}[1]{\mathsf{Q}_{#1}}
\newcommand{\serverp}[1]{\mathsf{P}_{#1}}
\newcommand{\serverd}[1]{\mathsf{d}_{#1}}
\newcommand{\charge}[2]{\mathsf{C}_{#1,#2}}
\newcommand{\finishtime}[2]{\mathsf{f}_{#1,#2}}
\newcommand{\virtualtime}[1]{\mathsf{V}_{#1}}
\newcommand{\bandwith}[1]{\mathsf{U}_{#1}}
\newcommand{\grubstate}[1]{{\textsc{#1}}}
\tikzset{
    scale plot marks/.is choice, scale plot
    marks/false/.code={ \def\pgfuseplotmark##1{\pgftransformresetnontranslations\csname
    pgf@plot@mark@##1\endcsname} }, scale plot marks/true/.style={},
    scale plot marks/.default=true }
\newtheorem{mytheo}{\textbf{Theorem}}
\newtheorem{mylemma}{\textbf{Lemma}}
\newtheorem{mycollo}{\textbf{Corollary}}
\title{Migrate when necessary: toward partitioned reclaiming for soft real-time tasks}
\author{Houssam-Eddine Zahaf, Giuseppe Lipari \& Luca Abeni}
\date{October 7, 2017}
\begin{document}

\maketitle

%\begin{abstract}
%In this paper, we tackle the problem of partitioning a set of sporadic
%soft real-time tasks upon a set of identical cores enabling voltage
%and frequency scaling with the goal of reducing the energy
%consumption. Each core has its own greedy reclaiming unused bandwidth
%scheduler. The goal of this work to maintain a required quality of
%service (QoS) while reducing the energy consumption. To achieve these
%goals, we propose heuristics to adapt dynamically the task budget and
%the core's operating frequency. Even that the allocation scheme is
%partitioned, two types of migrations are allowed: temporary and
%permanent. A set of synthetic experiments are presented to evaluate
%the performances of the proposed heuristics.
%\end{abstract}
\begin{abstract}
  This paper presents a new strategy for scheduling soft real-time
  tasks on multiple identical cores. The proposed approach is based on
  partitioned CPU reservations and it uses a reclaiming mechanism to
  reduce the number of missed deadlines. We introduce the possibility
  for a task to temporarily migrate to another, less charged, CPU when
  it has exhausted the reserved bandwidth on its allocated CPU. In
  addition, we propose a simple load balancing method to decrease the
  number of deadlines missed by the tasks. The proposed algorithm has been evaluated through
  simulations, showing its effectiveness (compared to other multi-core
  reclaiming approaches) and comparing the performance of different
  partitioning heuristics (Best Fit, Worst Fit and First Fit).
\end{abstract}

\textbf{Keywords: }{Resource reservations, partitioned scheduling, CPU
  reclaiming
  
\section{Introduction}
\label{sec:introduction}

Many mobile appliances (smart-phones, tablets, smart-watches, etc.)
feature variable workloads which may contain (soft) real-time tasks
(audio, video, real-time communication, gaming, etc); they have strong
requirements in terms of energy consumption, but also of latency,
memory etc. They are equipped with multicore processors, and run with
a general purpose OS (e.g. Android, which is based on the Linux
kernel, or iOS). % Time sensitive applications greatly benefit from
% using proper real-time scheduling and resource allocation mechanisms.

These OSs generally provide some kind of real-time support for
time sensitive applications. For example, the Linux kernel provides
three real-time scheduling policies: \texttt{SCHED\_FIFO} and
\texttt{SCHED\_RR} which are based on fixed priorities and are
standardized by POSIX; and \texttt{SCHED\_DEADLINE}~\cite{Lelli16},
which provides resource reservation on top of Earliest Deadline First
(EDF). Thanks to the \emph{temporal isolation} property, the latter is
particularly useful when mixing real-time and non real-time workloads
in the same system, as required by the appliances mentioned above.
From our experience, the typical workload executed in these systems
consists of a few hard real-time tasks with relatively small
utilizations, and a certain number of soft real-time tasks with
variable execution time.

These scheduling policies are highly configurable, and can be used to
implement either \emph{global scheduling} or \emph{partitioned
  scheduling} by properly setting the tasks' \emph{CPU affinities} or
using the {\tt cpuset} mechanism\footnote{In theory, CPU affinities
  can even be used to implement even more complex configurations, but
  this setup is harder to analyze~\cite{Gujarati2015}.}.

From a logical point of view, in global scheduling all tasks are
ordered by priority in a single global queue, and the $m$ highest
priority tasks are executed on the $m$ available processors.  Due to
	the intrinsic nature of the global queue, tasks may \emph{migrate}
from one processor to another one. Task migration enables automatic
load balancing across CPUs, however it is a possible source of
overhead: in addition to the cost of the context switch, tasks may
have to reload the content of at least the L1 cache; in turns this may
cause additional contention on the shared memory bus, cache evictions
for the other tasks, increased execution times, etc.  Moreover, global
EDF and global FP algorithms do not always achieve a high schedulable
utilization.% due to the Dhall's effect \cite{dhall1978real}.
 
In partitioned scheduling, every task is allocated on a processor and
it is not allowed to migrate. % No optimal allocation algorithm exists,
% and classical heuristics as First-Fit Decreasing (FF-D), Worst-Fit
% Decreasing (WF-D) and Best-Fit Decreasing (BF-D) are often used.
Partitioned scheduling may achieve a higher schedulable
utilization than global EDF and global fixed priorities, and it is
simpler to implement in an OS kernel. However, pure (and static)
partitioned scheduling performs poorly in presence of highly
dynamically workloads, where tasks have variable execution times and
can dynamically enter and leave the system. In some extreme
cases, a re-allocation of tasks to processors (\emph{load balancing})
might be needed every time the system load changes (a task enters (or
leaves) the system).

Recently, it has been shown that semi-partitioned scheduling can
achieve in practice very high (quasi-optimal) schedulable
utilization~\cite{BBB16-semipartitioning}, even in presence of dynamic
workloads~\cite{Alessandro-ecrts}. Using semi-partitioned scheduling,
tasks are split in two or more sub-tasks, and every sub-task is
statically assigned to a CPU / core. This solution has been shown to
work well in practice; however, to enforce precedence constraints
across sub-tasks of the same task, the task algorithm assigns short
relative deadlines to the sub-tasks and imposes that a sub-task cannot
start executing before the absolute deadline of the preceding
one. This complicates the scheduling algorithm and the
schedulability analysis (and hence admission control of newly incoming
tasks).

When considering Open Systems running soft real-time tasks, it may be
more convenient to avoid the complexity of task splitting altogether,
and use other simpler techniques, such as resource reservation and
reclaiming.
We focus on this proposal because our goal is to support mixed workloads
(composed by both hard and soft real-time tasks): resource reservation
can be used to schedule both hard and soft real-time tasks~\cite{abeni1998integrating}
and a reclaiming mechanism can improve the performance of soft real-time
tasks.

%instead of using
%semi-partitioned scheduling and task splitting to guarantee hard
%real-time tasks with large workloads
%FIXME: How can we re-add this sentence?

\paragraph{Resource Reservation and reclaiming}
Resource reservations are used to provide temporal isolation and
individual real-time guarantees to hard and soft real-time tasks. In
the resource reservation framework \cite{Mer94}, every task is
assigned a reservation period and it is reserved a percentage of the
computational bandwidth: if its resource requirements never exceed the
reserved bandwidth, and by properly assigning the reservation period,
the task is guaranteed to never miss its deadlines. % Therefore, by
% correctly assigning the resource reservation parameters, we can
% process tasks with hard real-time constraints.

On the other hand, we can also reserve bandwidth based on the average
computational requirements of the (soft) real-time tasks. In this
case, a task may occasionally miss its deadline. However, we can further
opportunistically reduce the percentage of deadline misses by
reclaiming the unused bandwidth in the system.

There is a vast literature on reclaiming for resource reservation
systems in single processor systems. The problem has been treated much
less extensively in multiprocessor systems. Currently, Linux supports
resource reservations with the SCHED\_DEADLINE policy (both global and
partitioned) and implements the sequential M-GRUB reclaiming
strategy~\cite{abeni2016multicore} that, as will be shown in
Section~\ref{sec:comp-job-migr}, can result in a large number of
tasks' migrations.

\paragraph{Our approach} In this paper, we investigate the use of
\emph{partitioned resource reservations} together with a CPU
reclaiming mechanism and a limited (temporary) task migration, plus a
simple load balancing algorithm. The simplest way to combine CPU
reclaiming with partitioned resource reservations is to statically
assign tasks/reservations to CPUs, and to perform per-CPU (local)
reclaiming only. This is basically what we get by using the single
processor GRUB algorithm~\cite{lipari2000greedy}. Such a simple
strategy is ineffective when a soft real-time task needing more than
what has been reserved is placed on a CPU where there is no unused
bandwidth, whereas unused bandwidth is available on other CPUs.

On the other end of the spectrum, we can use resource reservations
with global EDF scheduling and an extension of the GRUB algorithm for
global scheduling. This solution has been investigated
in~\cite{abeni2016multicore}: two reclaiming algorithms have been proposed, the
\emph{sequential reclaiming} (where unused bandwidth is collected on a
per-CPU basis) and \emph{parallel reclaiming} (where unused bandwidth
is collected at the global level). The latter is very similar to the
M-CASH algorithm proposed by Pellizzoni and Caccamo~\cite{pellizzoni2008m}.

The solution investigated in this paper is a trade-off between these
two extremes. We use partitioned reservations (tasks / reservations
statically assigned to CPUs / cores) and global reclaiming (achieved
through a temporary migration mechanism). We complement the temporary
migration mechanism with a load balancing algorithm to better tolerate
large variations in the workload.

% Maybe to be removed?
\paragraph{Organization}
The paper is structured as follows. In Section \ref{sec:state-art}, we
discuss the state of the art in multiprocessor real-time scheduling.
In Section \ref{sec:sys-model} we present the system model. In Section
\ref{sec:grub-algorithm} we recall the Greedy Reclamation of Unused
Bandwidth (GRUB) algorithm. Section~\ref{sec:grub-tempmigr}
presents our proposed framework and it is the core of our paper: we
describe the migration conditions, and we
present the proof that the migration will not affect the correctness
of temporal isolation.
Section~\ref{sec:grub-partitioning} describes the partitioning
and load balancing techniques used when temporary migrations are not
sufficient to provide good real-time performance.
Section \ref{sec:results-discussions} is
reserved to the experimentation and discussions. We conclude in
Section \ref{sec:conclusion}.

\section{State of the art}
\label{sec:state-art}

\paragraph{Optimality} 
In real-time scheduling theory, all optimal
scheduling algorithms for multiprocessor systems (for example,
RUN~\cite{regnier2011run}, QPS~\cite{Massa2014}, etc.) belong to the class
of global scheduling algorithms. They are optimal in the sense that,
under certain conditions, they can achieve full utilization of the
platform without any deadline miss. From a practical
point of view these algorithms result to be complex to implement, because they
require tight synchronization among processors. Furthermore, they all
produce \emph{frequent task migrations} which may increase tasks'
execution times.

\paragraph{Partitioning} 
On the other end of the spectrum we have partitioned
scheduling. Partitioned scheduling is simpler to implement as each
processor can be considered as a single independent computing
resource. Also, if each task is allocated on one processor, it cannot
migrate, thus reducing the scheduling overhead. Unfortunately, optimal
partitioning is a NP-hard problem (equivalent to knapsack).

% there
% is no optimal way to partition a task set among $m$ identical
% processors, and in the worst-case we can achieve total utilization
% factors as low as 50\% of the platform utilization.

Recently, semi-partitioned scheduling has emerged as an interesting
alternative to global and partitioned scheduling. In semi-partitioned
scheduling, most of the tasks are partitioned among processors, and a
few tasks are \emph{split} in multiple parts (with each part executing on
on a different processor). In comparison
with global scheduling, semi-partitioned scheduling permits to reduce
and keep under control the number of migrations; semi-partitioned
scheduling also shows a better utilization factor when compared with
pure partitioned scheduling. Brandenburg and Gül
\cite{BBB16-semipartitioning} have shown that the use of
semi-partitioned scheduling coupled with a slack reclaiming
strategy~\cite{Caccamo05} in practice allows to achieve very high
utilization factors.%, close to the optimal ones.

Task splitting is usually performed on a static task set before
run-time, thus this technique cannot be easily used on-line on a
dynamically varying task set. Recent
work~\cite{BBB16-semipartitioning,Alessandro-ecrts} addresses this
issues by performing task splitting on dynamically arriving tasks.

\paragraph{Soft Real-Time scheduling} Since respecting all of the
tasks' deadlines on multi-processor systems is a complex problem,
some scheduling algorithms focus on providing tardiness / lateness
guarantees to soft real-time tasks. Many of these algorithms are based
on global scheduling, that has been shown to have some optimality properties
for soft real-time tasks~\cite{Devi05,Val05}.

Scheduling of soft and hard real-time tasks has been mixed by using
partitioned scheduling for hard tasks and global scheduling for soft
tasks (and scheduling soft tasks in background respect to hard
tasks)~\cite{Bra07}. In this work, a reclaiming mechanism similar to CASH
has been used to improve the performance of soft real-time tasks.

Finally, it has been noticed that semi-partitioned scheduling is a good
choice for soft real-time tasks too~\cite{And14}.

As explained, in this paper we focus on a different approach, based on
partitioned scheduling (with some possible load balancing) and CPU reclaiming
associated to temporary migrations to reduce the number of migrations.

\paragraph{Resource reclaiming} Many reclaiming algorithms exists for
reclaiming unused bandwidth on single processors
\cite{lin2005improving,Caccamo05,lipari2000greedy}. In particular, we
cite here two reclaiming mechanisms based on EDF. The CASH (Capacity
Sharing) \cite{Caccamo05} algorithm utilizes a queue of unused
capacities, each capacity is a pair of budget and deadline. When a job
of a periodic task finishes, the residual budget together with the
task absolute deadline are inserted in the reclaiming queue. When a
task executes, in addition to its own budget, it can use all budgets
in the reclaiming queue with deadline no greater than the task's
deadline. The algorithm is simple and effective, however it can only
reclaim the unused budget released by periodic tasks.

GRUB (Greedy Reclamation of Unused Bandwidth) \cite{lipari2000greedy}
uses a different scheme based on utilization. The algorithm keeps
track of the bandwidth of the active tasks in the system: when a task
executes, the budget is decreased by an amount proportional to the
free available bandwidth. GRUB can be effectively used with periodic
and sporadic tasks, however it can only treat periodic reservations
with relative deadline equal to period. The techniques based on this
paper are based on GRUB (see Section \ref{sec:grub-algorithm}.)

M-CASH is a reclaiming mechanism for multicore systems with global EDF
scheduling \cite{pellizzoni2008m}. It uses CASH for periodic tasks,
and an utilization based algorithm similar to GRUB for sporadic tasks
and for reclaiming the extra free bandwidth in the system. 

The authors of \cite{abeni2016multicore} propose two different
reclaiming mechanism for global EDF based on GRUB: parallel reclaiming
is a global reclamation scheme (similar to the sporadic mechanism
proposed in M-CASH) where all available unused bandwidth is stored in
a single global variable accessible by all cores. Sequential
reclaiming removes this constraint by storing the unused bandwidth on
a per-core basis. Unfortunately, due to the global nature of the
scheduler, both M-CASH, Parallel and Sequential reclaiming schemes
suffer from a relatively large amount of task migrations.

% In this paper we address the reclamation problem using partitioned
% scheduling.

%%% Local Variables:
%%% mode: latex
%%% TeX-master: "../grub"
%%% End:

\section{System model}
\label{sec:sys-model}

A real-time task $\task{i}$ is a (possibly infinite) sequence of jobs
$\job{i}{k}$. Each job $\job{i}{k}$ arrives at time $\arrival{i}{k}$
and is expected to complete its execution before time
$\arrival{i}{k} + \deadline{i}$, where $\deadline{i}$ is the
\emph{relative deadline} of the task. A job \emph{misses its
  deadline} if the job finishing time $\finishtime{i}{k}$ is greater
than $\arrival{i}{k} + \deadline{i}$. A task is \emph{periodic} if the
arrival time between two consecutive jobs is fixed to $\period{i}$,
called period ($\arrival{i}{k} = \arrival{i}{0}+ K \cdot \period{i}$).
Sporadic tasks relax the last constraint so the arrival time between
two consecutive jobs is equal or greater than the task's period
$\period{i}$ ($\arrival{i}{k+1} - \arrival{i}{k} \geq \period{i}$).
In this paper we assume that all tasks have deadline equal to their
period (or their minimum inter-arrival time). Let $\charge{i}{k}$
denote the execution time of job $\job{i}{k}$: its exact value is not
know before the execution of the job.

A \emph{server} is a scheduling abstraction that is used by the
scheduler to provide temporal isolation between different tasks. In
the resource reservation framework, each task is assigned a server
$\server{i}$ characterized by a budget $\budget{i}$ and a period
$\serverp{i}$. The resource reservation algorithm guarantees that the
task can execute for a minimum amount of time $\budget{i}$ every
period $\serverp{i}$.  The ratio $\budget{i} / \serverp{i}$ is the
\emph{server utilization} $\serveru{i}$: it represents the fraction of
CPU time reserved to $\task{i}$. The total CPU utilization
$\bandwith{}$ of a set of servers is simply the sum of all server
utilization's $\bandwith{} = \sum_{i=1}^n \serveru{i}$.  When using
partitioned scheduling, tasks and servers are allocated to CPUs/cores;
we denote by $\taskset{j}$ the set of tasks allocated on core $j$, and
$\bandwith{j}$ the total utilization of $\taskset{j}$.

In this paper we consider an
identical multicore platform: all cores have the same characteristics
(architecture, micro-architecture, $\ldots$) and share the same fixed
operating frequency.

\section{The GRUB algorithm}
\label{sec:grub-algorithm}

In this section, we recall the GRUB (Greedy Reclamation of Unused
Bandwidth) algorithm~\cite{lipari2000greedy} and its main
properties. Therefore, in this section we restrict to single processor
scheduling.
 
Let $\task{i}$ be a task of $\taskset{}$. All jobs of task $\task{i}$
are executed according to their arriving order: job $\job{i}{k}$
cannot start executing until job $\job{i}{k-1}$ has finished. Each
task is assigned to a server $\server{i}$ which is characterized by
two parameters: the server bandwidth $\serveru{i}$ and the server
period $\serverp{i}$.

The run-time behavior of each server $\server{i}$ is described by two
state variables, that are computed and updated at run-time: the server
deadline $\serverd{i}$ and the virtual time $\virtualtime{i}$.  The
urgency of each job of $\task{i}$ depends on its server's deadline,
hence jobs of $\taskset{}$ are scheduled according to their earliest
server's deadline (EDF-like policy). Please notice that the task's
deadline $\deadline{i}$ and the server's deadline $\serverd{i}$ are
two different entities ($\deadline{i}$ is a {\em relative} deadline,
while $\serverd{i}$ is an {\em absolute} deadline). Also, please
notice that the absolute deadline of job $\job{i}{k}$ is
$\arrival{i}{k} + \deadline{i}$ and it may or may not coincide with
the server's deadline $\serverd{i}$. A global variable
$\bandwith{}^{a}$ stores the current \emph{active load} on the
processor. $\bandwith{}^{a}$ is initialized to 0.

The GRUB algorithm guarantees that all the servers' deadlines are
respected (each server $\server{i}$ allows its task $\task{i}$ to
execute for $\budget{i}$ time units before $\serverd{i}$) if Equation
\eqref{eq:totalbandwith} (EDF schedulability) is respected:
\begin{equation}
  \label{eq:totalbandwith}
  \sum_i \serveru{i} \leq 1.
\end{equation}

The dynamic evolution of $\server{i}$ is described by the finite state
machine shown in Figure \ref{fig:statemachine}.
% At time $t$, server $\server{i}$ can be in one of the following states:
% \emph{Inactive, Ready, Active-Contending
%   (\textsc{\footnotesize{Act-Cont}}), Active-Non-Contending
%   (\footnotesize{\textsc{Act-N-Cont})}} and \emph{Inactive}, as shown
% in Figure \ref{fig:statemachine}. 
% We denote by $\mathsf{S}_{\server{i}}(t)$ the state of server
% $\server{i}$ at instant
% $t$. 

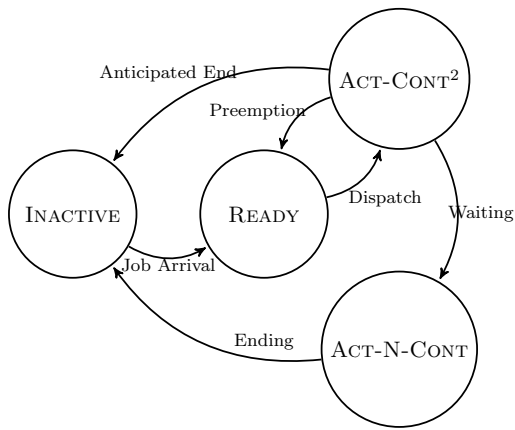
\begin{figure}[tb]
  \centering
  \resizebox{0.65\columnwidth}{!}{
    \begin{tikzpicture}[->,>=stealth',shorten >=1pt,auto,node distance=3cm,
      thick,main node/.style={circle,draw}, minimum size = 2cm]
      \node[main node] (1) {\grubstate{Inactive} };
      \node[main node] (2) [right of=1] {\grubstate{Ready}};
      \node[main node] (3) [above right of=2] {\grubstate{Act-Cont}\footnote{active-contending}};
      \node[main node] (4) [below right of=2] {\grubstate{Act-N-Cont}};
      
      \path[every node/.style={font=\sffamily\small}]
      (1) edge [bend right]node{}(2)
      (3) edge  [bend right] node {}(1)
      (2) edge  [bend right] node {}(3)
      (3) edge  [bend right] node {}(2)
      (3) edge  [bend left] node {}(4)
      (4) edge [bend left] node {}(1)
      ;
      
      \node at (1.5,-0.8){\footnotesize Job Arrival};
      \node at (4.9,0.25){\footnotesize Dispatch};
      \node at (2.9,1.6){\footnotesize Preemption};
      \node at (6.4,0){\footnotesize Waiting};
      \node at (3,-2){\footnotesize Ending};
      \node at (1.5,2.2){\footnotesize Anticipated End};
    \end{tikzpicture}
  }
  \caption{Grub Control automaton}\label{fig:statemachine}
\end{figure}

%\subsection{\grubstate{Ready} state}
\paragraph{\grubstate{Ready} state}
Server $\server{i}$ is in \grubstate{Ready} state when there is at least a
pending job and server's deadline $\serverd{i}$ is not the earliest.

When the server moves from \grubstate{Inactive} to \grubstate{Ready} because a
new job has arrived at time $t$, the server's virtual time
$\virtualtime{i}$ and deadline $\serverd{i}$ are updated as follows:
\begin{align}
  \virtualtime{i}& \leftarrow \arrival{i}{k} \label{eq:varrival}\\ 
  \serverd{i}& \leftarrow \virtualtime{i} + \serverp{i} 
\end{align}
Also, its utilization must be added to the utilization of the active
tasks:
\begin{align}
   \bandwith{}^{a} \leftarrow \bandwith{}^a + \serveru{i}
\end{align}
where $\bandwith{}^{a}$ is the total utilization of the active
servers. 

A server $\server{i}$ in \grubstate{Ready} state goes into
\grubstate{Executing} state when its deadline is the earliest among all
ready servers.

%\subsection{ \grubstate{Executing} state}
\paragraph{\grubstate{Executing} state}
While server $\server{i}$ is in \grubstate{Executing} state for $\Delta t$
units of time, its virtual time $\virtualtime{i}$ is updated as
follows:
 \begin{equation}
   \virtualtime{i}(t+\Delta t) \leftarrow  \virtualtime{i}(t) + \frac{\bandwith{}^{a}}{\serveru{i}} \Delta t
 \end{equation}
 % \la{We are using discrete time, right? Should we say ``if the task is
 %   active between t and t+1, the virtual time is updates as...''}.
 If, while in \grubstate{Executing}, the server's deadline becomes equal to
 its virtual time, the deadline is postponed to:
 \begin{equation} \label{eq:incrementdeadline}
   \serverd{i} \leftarrow \virtualtime{i}(t) + \serverp{i}
 \end{equation}

 Notice that deadline postponing can decrease the relative priority of
 the server, that can then be preempted by other more urgent servers
 that are present in the ready state.

 When a job ends its execution and there are no more pending jobs of
 the same task, its server can move out of the \grubstate{Executing}
 state: if its virtual time is greater than $t$, it moves to
 \grubstate{Act-N-Cont} state; otherwise, it goes to \grubstate{Inactive}.

%\subsection{\grubstate{Act-N-Cont} state} 
\paragraph{\grubstate{Act-N-Cont} state}
When the server is in \grubstate{Act-N-Cont} state, even if no job is
pending we still consider it as \emph{active}, because its virtual
time is in the future, which means that it has already consumed all
its reserved bandwidth until time $\virtualtime{i}{t}$. Hence, its
utilization is still accounted for in the active utilization $\bandwith{}^a$.

In this state, the server does not update its variables: when $t$
reaches $\virtualtime{i}$, the server moves to the \grubstate{Inactive}
state.

If a new job arrives while the server is in \grubstate{Act-N-Cont} state,
then the server moves to state \grubstate{Ready} without updating its
variables.

%\subsection{\grubstate{Inactive} state}
\paragraph{\grubstate{Inactive} state}
In this state, the server has no pending job to serve.  Every time the
server enters this state, the active bandwidth is updated accordingly:
\begin{equation}
  \bandwith{}^a  \leftarrow \bandwith{}^a - \serveru{i}
\end{equation}

% \la{We use ``less or equal'' instead of ``equal'' because we are using
%   a discrete time model, right? Does this have any impact on
%   schedulability?}

\subsection{Properties}
\label{sec:grub_properties}

In this section we recall some of the properties of the GRUB server
that will be useful later for our proof of correctness. We start by
defining the notion of \emph{active server}. A server is active if it
is in any of the states \grubstate{Ready}, \grubstate{Active-Contending},
\grubstate{Active-Non-Contending}. Only active servers contribute with
their bandwidth to variable $\bandwith{}^a$ and hence can impact the
amount of available reclaiming. The following property holds.

\begin{mylemma}
  \label{th:grub_active_utilisation}
  Given a single processor system scheduled by the GRUB algorithm, and
  let $\bandwith{}^a(t)$ be the sum of the bandwidth of all
  \emph{active} servers in the system at time $t$.

  If $\forall t \;\; \bandwith{}^a(t) \leq 1$, then no server misses
  its scheduling deadline, regardless of the behavior of the served
  tasks.
\end{mylemma}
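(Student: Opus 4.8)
The plan is to reduce the statement to a single invariant on the servers' virtual times, and then to establish that invariant. Concretely, I would prove that under the hypothesis $\bandwith{}^{a}(t) \le 1$ the following holds: for every server $\server{i}$ and every time $t$ at which $\server{i}$ is active, $\virtualtime{i}(t) \ge t$. This invariant is exactly what is needed, because a server's scheduling deadline $\serverd{i}$ is postponed (via \eqref{eq:incrementdeadline}) precisely when its virtual time reaches $\serverd{i}$; hence $\virtualtime{i}(t)\ge t$ guarantees that by the real time $t=\serverd{i}$ the virtual time has already caught up to $\serverd{i}$, i.e. the postponement (equivalently, the consumption of the reserved budget before $\serverd{i}$) has occurred no later than $\serverd{i}$. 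Thus the invariant immediately implies that no active server can be caught with its virtual time lagging behind its own deadline, which is the meaning of ``no missed scheduling deadline.''

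The first part of the argument, showing that the invariant can only be broken by a contending-but-not-running server, is routine and I would treat it by following the state machine of Figure~\ref{fig:statemachine}. When $\server{i}$ enters the active set from \grubstate{Inactive} on a job arrival at time $t$, Equation~\eqref{eq:varrival} sets $\virtualtime{i}=\arrival{i}{k}=t$, so the invariant holds with equality at activation. While $\server{i}$ is in \grubstate{Executing}, its virtual time grows at rate $\bandwith{}^{a}/\serveru{i}$; since an executing server is itself active we have $\bandwith{}^{a}\ge\serveru{i}$, so this rate is at least $1$ and the gap $\virtualtime{i}(t)-t$ is non-decreasing while running. In \grubstate{Act-N-Cont} the virtual time is frozen and, by the very definition of that state, the server leaves it (towards \grubstate{Inactive}) exactly when $t$ reaches $\virtualtime{i}$, so $\virtualtime{i}\ge t$ is maintained. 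The only way the gap $\virtualtime{i}(t)-t$ can shrink is therefore while $\server{i}$ is active and contending but preempted (in \grubstate{Ready}), when $\virtualtime{i}$ is frozen while $t$ advances.

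Hence the crux, and the step I expect to be the main obstacle, is to rule out a first violation instant $t^{*}$ at which some \grubstate{Ready} server $\server{i}$ reaches $\virtualtime{i}(t^{*})=t^{*}$ while not executing. The plan here is an aggregate, busy-period argument: let $t_0$ be the start of the busy period containing $t^{*}$ (the last instant the processor was idle), so that the processor runs continuously on $[t_0,t^{*}]$ and, by EDF, only on servers with deadline no larger than $\serverd{i}$. Integrating the virtual-time update over this interval gives $\int_{t_0}^{t^{*}}\bandwith{}^{a}(s)\,ds=\sum_{m}\serveru{m}\,\Delta\virtualtime{m}$, where the sum is over the servers that execute and $\Delta\virtualtime{m}$ is their virtual-time advance; the hypothesis $\bandwith{}^{a}\le 1$ bounds the left-hand side by $t^{*}-t_0$. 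I would then show that each preempting server's virtual time stays below $\serverd{i}$ while it runs and that every such server began its current active interval no earlier than $t_0$ with $\virtualtime{m}=\arrival{m}{k}$, so that the total weighted virtual-time demand that can legitimately delay $\server{i}$ cannot fill $t^{*}-t_0$; the remaining slack forces $\server{i}$, as the contending server with no earlier-deadline competitor, to execute and push $\virtualtime{i}$ strictly past $t^{*}$, contradicting $\virtualtime{i}(t^{*})=t^{*}$. The delicate points, which are where the real work lies, are the bookkeeping for servers that are \grubstate{Act-N-Cont} at $t_0$ (they contribute their bandwidth to $\bandwith{}^{a}$ without executing) and the fact that the active set changes throughout $[t_0,t^{*}]$, so the bound on $\bandwith{}^{a}$ must be applied pointwise rather than to a fixed utilization sum; handling these two effects correctly is what ties the time-varying active-bandwidth condition back to the EDF feasibility bound~\eqref{eq:totalbandwith}.
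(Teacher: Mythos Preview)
First, note that the paper does not prove this lemma at all: its entire proof is the sentence ``The proof can be found in~\cite{lip}.'' So there is nothing in the paper to compare your argument against; your proposal has to stand on its own.

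Your reduction is fine in the direction you need: if every active server satisfied $\virtualtime{i}(t)\ge t$, then at $t=\serverd{i}$ one would have $\virtualtime{i}(\serverd{i})\ge\serverd{i}$, i.e.\ the budget has been delivered by the scheduling deadline. The difficulty is that the invariant itself is \emph{false}. Take two servers with $\serveru{1}=\serveru{2}=\tfrac12$, $\serverp{1}=10$, $\serverp{2}=2$, both activated at time $0$ by tasks that never finish. Then $\bandwith{}^{a}\equiv 1$, $\serverd{1}=10$, $\serverd{2}=2$, and EDF runs $\server{2}$ while $\server{1}$ sits in \grubstate{Ready} with $\virtualtime{1}$ frozen at $0$. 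Already at $t=1$ we have $\virtualtime{1}(1)=0<1$, and the gap persists for several time units; yet no scheduling deadline is ever missed in this system ($\server{2}$'s deadline is repeatedly postponed, and once it exceeds $10$ server $\server{1}$ runs long enough for $\virtualtime{1}$ to reach $10$ by real time $10$). So a \grubstate{Ready} server with a late deadline routinely has $\virtualtime{i}(t)<t$ without any miss, and your busy-period argument is aimed at a target that does not hold.

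The argument in the cited reference does not try to keep each $\virtualtime{i}$ abreast of real time. Instead it treats each deadline assignment as delimiting a ``chunk'' of server execution of size at most $\serveru{i}\serverp{i}$ with absolute deadline $\serverd{i}$ and release no earlier than $\serverd{i}-\serverp{i}$, and bounds the total processor demand of all such chunks on any interval by the integral of $\bandwith{}^{a}$, hence by the interval length whenever $\bandwith{}^{a}\le 1$; EDF optimality then yields the lemma. Your integral identity $\int_{t_0}^{t^{*}}\bandwith{}^{a}=\sum_m\serveru{m}\,\Delta\virtualtime{m}$ is in fact a step in that direction, but it should feed a demand-bound argument over these chunks rather than the per-server invariant $\virtualtime{i}(t)\ge t$.
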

\begin{proof}
  The proof can be found in \cite{lip}.
\end{proof}

Notice that Lemma \ref{th:grub_active_utilisation} does not require a
static set of servers: a server can enter the system at any time, and
it will not impact the guarantees on the already existing servers, as
long at the total active bandwidth does not exceed $1$ at any instant.

A consequence of the previous Lemma is that, to guarantee that no
server ever misses its deadline, we can limit the sum of the
bandwidths of all servers currently in the system, whether active or
not.
\begin{mycollo}
  If the sum of the utilization of all the servers currently assigned
  to the processor does not exceed 1, no server deadline is missed.
\end{mycollo}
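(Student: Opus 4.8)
The plan is to derive the corollary directly from Lemma~\ref{th:grub_active_utilisation} by a simple domination argument. The premise of the lemma is a constraint on the active bandwidth $\bandwith{}^a(t)$ at every instant, whereas the corollary's hypothesis constrains the sum of the utilizations of \emph{all} servers assigned to the processor, active or not. The bridge between the two is the observation that the set of active servers at any time $t$ is always a subset of the set of all assigned servers, so the active bandwidth can never exceed the total assigned utilization.

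First I would recall, from the state machine of Figure~\ref{fig:statemachine}, that a server contributes its utilization $\serveru{i}$ to $\bandwith{}^a$ precisely when it is in one of the states \grubstate{Ready}, \grubstate{Active-Contending}, or \grubstate{Active-Non-Contending} (i.e.\ when it is active), and that each such utilization is non-negative, since $\serveru{i} = \budget{i}/\serverp{i}$ with positive budget and period. Consequently, at every time $t$,
\begin{equation}
  \bandwith{}^a(t) = \sum_{\server{i}\text{ active at }t} \serveru{i} \;\leq\; \sum_{\text{all }\server{i}} \serveru{i}.
\end{equation}
Invoking the corollary's hypothesis that the right-hand sum does not exceed $1$, and combining it with the inequality above, yields $\bandwith{}^a(t) \leq 1$ for all $t$, which is exactly the premise of Lemma~\ref{th:grub_active_utilisation}. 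Applying the lemma then gives the desired conclusion that no server misses its scheduling deadline.

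There is no real technical obstacle here; the only point requiring care is justifying the set inclusion, i.e.\ confirming that the active utilization is always a partial sum of the total utilization and never overcounts. This follows immediately from the semantics of the state machine: a server's utilization is added to $\bandwith{}^a$ only upon its becoming active and subtracted again upon its return to the \grubstate{Inactive} state, so $\bandwith{}^a(t)$ can never include a contribution that is not already present in the total assigned utilization.
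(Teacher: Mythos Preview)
Your proposal is correct and follows exactly the paper's approach: bound $\bandwith{}^a(t)$ above by the total assigned utilization and invoke Lemma~\ref{th:grub_active_utilisation}. If anything, your write-up is cleaner than the paper's one-line proof, which contains an apparent typo (the inequality there is written in the wrong direction).
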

\begin{proof}
  Since $\forall t \;\; \sum_i \bandwith{i} \leq \bandwith{}^a$, the
  corollary simply descends from Lemma
  \ref{th:grub_active_utilisation}.
\end{proof}

\subsection{Advantages of GRUB}
\label{sec:advantages-grub}

Compared to other reclaiming algorithms
(e.g. \cite{Caccamo05,lin2005improving}), GRUB has some advantages:
\begin{itemize}
\item It can be used with periodic, sporadic and aperiodic
  non-real-time tasks;
\item It automatically reclaims the unreserved bandwidth;
\item It has the same complexity as the CBS algorithm
  \cite{abeni1998integrating};
\item It keeps track of the \emph{active bandwidth} on the processor;
  this property will be used for \emph{temporary migration} (see
  Section \ref{sec:grub-tempmigr});
\item It can be used to lower the processor frequency (DVFS) in order
  to reduce energy consumption \cite{journals/tc/ScordinoL06}.% \footnote{We do not address explicitly
  % DVFS in this paper, we leave it as future work.}.
\end{itemize}

The GRUB algorithm has already been extended to multicore systems. In
\cite{pellizzoni2008m}, an algorithm similar to GRUB is used to
reclaim global unused bandwidth in the system; in
\cite{abeni2016multicore}, \emph{parallel reclaiming} and
\emph{sequential reclaiming} strategies based on GRUB have been
proposed. In both cases, the underlying scheduling algorithm is global
EDF. In this paper, we address partitioned EDF scheduling.

\section{GRUB and Migrations}
\label{sec:grub-tempmigr}

From now on, we assume a platform consisting of $m$ identical CPUs.
We propose to partition a set of GRUB servers into the $m$ available
CPUs. Each CPU has its own ready queue and a single-processor
scheduler based on EDF. Servers are partitioned using one of the
classical bin-packing heuristics available in the literature.%  While
% our method does not depend on the specific heuristic, in this paper we
% consider one between Best Fit (BF), Worst Fit (WF), First Fit (FF).

We assume that tasks (and the corresponding servers) may enter and
leave the system at any time. Suppose a new task $\task{i}$ enters the
system at time $t$, with server parameters
$\server{i} = (\budget{i}, \serverp{i})$. Then, the heuristic
allocation algorithm is run to select the CPU on which the server will
be allocated. The heuristic uses an \emph{admission control} test to
find the most suitable core where to allocate the server. The
allocation algorithm will be described in
Section~\ref{sec:allocation-strategy}.

After partitioning, tasks are scheduled on each CPU using the GRUB
algorithm. Additionally, tasks can temporarily migrate to other CPUs
to reclaim extra bandwidth. For every CPU, the algorithm maintains the
following variables:
\begin{itemize}
\item $\bandwith{j}$ is the total utilization of the servers allocated
  on core $j$; this \emph{does not} include the tasks that have been
  temporarily migrated on $j$.
\item $\bandwith{j}^m(t)$ is the total bandwidth at time $t$ of the
  tasks that have been temporarily migrated on core $j$.
\item $\bandwith{j}^a(t)$ is the total \emph{active} utilization of
  all the active servers on core $j$ at time $t$; this may include the
  bandwidth of the active tasks that have temporarily been migrated on
  $j$.
\end{itemize}

Each server is assigned an additional parameter, the \emph{migrating
  utilization} $\serveru{i}^m$. The migrating utilization is used to
distribute the unused bandwidth on the destination core to the
(possibly many) incoming servers. As we will see in the next section,
by setting its migrating utilization to $0$, temporary migration is
disabled for a given task.

\subsection{Temporary Migrations}
\label{sec:job-migration}

Suppose that task $\task{i}$ is allocated on core $j$ and it is served
by $\server{i}$. Consider job $\job{i}{k}$: it starts executing on
core $j$ according to the original GRUB algorithm. Suppose that at
some time $t_0$ the virtual time $\virtualtime{i}(t_0)$ of the server
becomes equal to the server's deadline $\serverd{i}$: the job has
consumed all its budget and can not reclaim the unused bandwidth of
the other tasks allocated to core $j$.

In the original GRUB algorithm, the server deadline is postponed; in
the partitioned version the job is declared as \emph{eligible 
for migration}. This gives $\job{i}{k}$ a chance to continue its
execution with the same server deadline (same urgency) on a different
core $j'$. In the following discussion, we denote core $j'$ as the
\emph{destination core}.

A job is eligible for migration at time $t_0$ when the following two conditions are verified:
% both Conditions \eqref{eq:migration} and \eqref{eq:migration2} are
% verified.
\begin{align}
  \virtualtime{i}(t_0)& \geq \serverd{i}   \label{eq:migration} \\
  \serverd{i} &> t_0                       \label{eq:migration2}
\end{align}
If \eqref{eq:migration} is true, task $\task{i}$ cannot continue to
execute on its current core without postponing the current server
deadline $\serverd{i}$, hence we may try to migrate it to a different
core. If Condition~\eqref{eq:migration2} is not verified, the current
server deadline $\serverd{i}$ has been reached, so there is no point
in migrating the task.

When a job is eligible for migration, the algorithm first selects the
destination core $j'$ as the one with the smallest active utilization
$\bandwith{j'}^a$, because on $j'$ there is more chance for the task
to reclaim extra bandwidth.

Then we must check that the migrating job fits in the
destination core with its migrating utilization:
\begin{equation}
  \label{eq:job-admission}
  \serveru{i}^m + \bandwith{j'}^m(t) + \bandwith{j'} \leq 1
\end{equation}
Recall that bandwidth $\bandwith{j'}$ is the total bandwidth of all
servers that are allocated on $j'$, whereas bandwidth
$\bandwith{j'}^m(t)$ is the total active bandwidth of the jobs that
have temporarily been migrated to $j'$ at time $t$. In practice,
Equation \eqref{eq:job-admission} guarantees that we are not
overloading the destination core with too many migrating jobs.

If Equation~\eqref{eq:job-admission} is not respected for the selected
core $j'$, we can still try to migrate the job by reducing its
migrating utilization $\serveru{i}^m$ to: 
\begin{equation} \label{eq:um} 
  \serveru{i}^{m\prime} = \min \{ \serveru{i}^{m}, 1-(\bandwith{j'} + \bandwith{j'}^m(t) \}
\end{equation}
% The proof of the correctness of Equation~\eqref{eq:um} will be
% presented in Section \ref{sec:proofs-correctness}.

The task is migrated by creating a new server $\server{i}^\prime$ on
the destination core $j'$ with:
\begin{itemize}
\item the same server period $\serverp{i}$, 
\item utilization equal to $\serveru{i}^{m\prime}$ as computed by
  Equation~\eqref{eq:um};
\item the server state is initialized to \grubstate{Ready};
\item its virtual time is initialized to $t_0$ and the server deadline
  is the same as $\serverd{i}$.
\end{itemize}
Task $\task{i}$ is served by the new temporary server
$\server{i}^\prime$ until the current job $\job{i}{k}$ completes. This
temporary server is managed by the GRUB algorithm like a regular
server.

Once the job completes, the task returns immediately to its original
processor and the temporary migration is concluded: the next jobs of
the task will start executing on the original processor. However, the
temporary server $\server{i}^\prime$ is not immediately deleted: when
the job completes, it follows the rules of the GRUB algorithm, and if
$\virtualtime{i}(t) > t$, the server first moves to state
$\grubstate{Act-N-Cont}$; later, when $\virtualtime{i}(t) = t$, it moves
to $\grubstate{Inactive}$ and it can be deleted from the system. Notice
that a temporary server remains active during its lifespan.

Bandwidth $\bandwith{j'}^m(t)$ % (defined in Equation
% \eqref{eq:job-admission}) 
keeps track of the total bandwidth of all tasks that are temporarily
migrated to core $j'$: it is incremented by $\serveru{i}^{m\prime}$
when the temporary server is created, and it is decremented by the
same amount when the temporary server becomes $\grubstate{Inactive}$ and
it is deleted.

Algorithm \ref{alg:jobmigration} modifies the original GRUB algorithm
to take into account job migration. When $\virtualtime{i}$ reaches
$\serverd{i}$ (Line 3 of the algorithm, implementing
Condition~\ref{eq:migration}), instead of simply postponing the server
deadline as in the original GRUB, Algorithm~\ref{alg:jobmigration}
tries to migrate the task by first selecting a destination core (Line
4). The algorithm uses a boolean flag per each task, denoted as
\texttt{migrated}$_{i}$, initialized to \texttt{false}. If the task
has not yet been migrated (condition at Line 5), to ensure that the
selected core has enough bandwidth to accommodate for the incoming
task, we update the migrating utilization (Line 7).
% if the task can
% execute for enough time on the target core (condition at Line 8) the
% task is migrated. 
At Line 8, we further limit the number of migrations by using an
appropriate threshold $\epsilon$ to guarantee that the task will be
able to execute enough time on the target core, otherwise the overhead
of migration can overcome its benefits. % Reasonable values of
% $\epsilon$ are in the order of the system tick. 
% PEPPE: removed to avoid criticism....

\begin{algorithm}[t]
  \caption{Temporary Migration}\label{alg:jobmigration}
  \begin{algorithmic}[1]

  \State \emph{Classic\_Grub\_scheduling\_Code}
  \State ~~~~~~~~~~~~~ $\cdots$
  \If {($\virtualtime{i} \geq \serverd{i}$)}
        \State $\textsc{p}^m$ = selectDestinationCore()
        \If {(\textbf{not} \texttt{migrated}$_i$)} 
            \State $\bandwith{j} = \bandwith{\textsc{p}^m} + \sum \serveru{a}^m $
            \State $\serveru{i}^m = \min(\serveru{i}^m, 1 - \bandwith{j})$
            \If {($\serveru{i}^m (\serverd{i} - t) / (\serveru{i}^m+U^{a}_m) >
\epsilon$)}
                \State Create temporary server $\server{i}' = (\serveru{i}^m, \serverp{i})$
                \State Assign task $\task{i}$ to $\server{i}'$
                % \State Assign service to task
                \State \texttt{migrated}$_i$ = true 
                \State $ \serverd{i}' = \serverd{i}$
                \State $\virtualtime{i}'(t) = t$
                \State State of $\server{i}'$ = \grubstate{Ready};
                \State DoMigration();
             \Else
                \State \texttt{migrated}$_i$ = false 
                \State $\serverd{i} = \virtualtime{i}(t) + \serverp{i}$
             \EndIf
        \Else  
          \State \texttt{migrated}$_i$ = false 
          \State $\serverd{i} = \virtualtime{i}(t) + \serverp{i}$
        \EndIf
      \EndIf
  \State ~~~~~~~~~~~~~ $\cdots$
  \State \emph{Classic\_Grub\_scheduling\_Code} 
\end{algorithmic}
\end{algorithm}

If the current job has already been migrated ({\tt migrated}$_i$ is
true) or if it has not the possibility to execute for enough time on
the target core (the test at Line 8 is false), then the server
deadline is postponed (Lines 19 and 22). The migration flag is set to
limit the number of possible migrations of the same job, and it is
reset to \texttt{false} once the job completes.

\subsection{Proofs of correctness}
\label{sec:proofs-correctness}

In this section we discuss the correctness of our partitioned GRUB
algorithm. In particular, we prove that the temporal isolation
property is still valid when we introduce the temporary migration
mechanism.

We start by observing that, on each core, we execute an instance of
the GRUB algorithm independent of the others. Therefore, as long as no
migration is allowed, we can use Equation~\eqref{eq:totalbandwith} as
an admission control for every core. Let us now consider the case of a
temporary migration.
\begin{mylemma}
  \label{th:active-utilisation-after-migration}
  Assume that task $\tau_i$ is migrated from core $j$ to core $j'$ at
  time $t_0$. Then, 
  \begin{align*}
    \forall t > t_0, \;\; \bandwith{j}^a(t) \leq 1 \;\; \wedge \;\; \bandwith{j'}^a(t) \leq 1
  \end{align*}
\end{mylemma}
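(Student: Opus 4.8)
The plan is to prove, for an arbitrary core $c$, the per-core invariant
\[
\bandwith{c}^a(t) \;\le\; \bandwith{c} + \bandwith{c}^m(t) \;\le\; 1 \qquad \forall t,
\]
and then read off the statement by specializing to the source core $c=j$ and the destination core $c=j'$. This invariant is exactly the hypothesis required to later invoke Lemma~\ref{th:grub_active_utilisation} on each core. The left inequality is a structural fact about which servers may be active on $c$, while the right inequality is the property that the admission tests \eqref{eq:totalbandwith}, \eqref{eq:job-admission} and \eqref{eq:um} are designed to maintain.

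For the left inequality I would partition the active servers on $c$ into the \emph{native} ones, allocated to $c$, and the \emph{migrated} ones, created as temporary servers on $c$. The native active servers are a subset of the servers allocated to $c$, so their total bandwidth is at most $\bandwith{c}$. For the migrated part I would use the observation stated just before Algorithm~\ref{alg:jobmigration} that a temporary server remains active throughout its whole lifespan, together with the bookkeeping rule that $\bandwith{c}^m(t)$ is incremented by $\serveru{i}^{m\prime}$ exactly when a temporary server is created and decremented by the same amount only when that server becomes \grubstate{Inactive} and is deleted. Hence the bandwidth of the active migrated servers equals $\bandwith{c}^m(t)$, and summing the two contributions gives $\bandwith{c}^a(t)\le \bandwith{c}+\bandwith{c}^m(t)$.

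For the right inequality I would show that $\bandwith{c}+\bandwith{c}^m(t)\le 1$ holds as an invariant over time. Initially $\bandwith{c}^m=0$ and $\bandwith{c}\le 1$ by the partition admission test, so the bound holds. The value of $\bandwith{c}$ is constant away from (re)allocations, and $\bandwith{c}^m$ changes only at deletions, which decrease it and trivially preserve the bound, and at migrations onto $c$, which increase it by $\serveru{i}^{m\prime}$. In the latter case, either \eqref{eq:job-admission} is satisfied and $\serveru{i}^{m\prime}=\serveru{i}^m$ with $\bandwith{c}+\bandwith{c}^m+\serveru{i}^m\le 1$, or the clamping \eqref{eq:um} forces $\serveru{i}^{m\prime}\le 1-(\bandwith{c}+\bandwith{c}^m)$; in both cases the updated $\bandwith{c}+\bandwith{c}^m$ stays at most $1$. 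Thus every migration onto $c$, including the migration of $\task{i}$ at $t_0$ and any migration onto $j$ or $j'$ occurring at some $t>t_0$, preserves the bound.

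Specializing to $c=j'$ yields $\bandwith{j'}^a(t)\le 1$ directly. For the source core $c=j$ I would additionally note that the migration of $\task{i}$ creates the temporary server $\server{i}'$ on $j'$ and leaves the original server $\server{i}$ allocated to $j$, adding no server to $j$; consequently $\bandwith{j}$ is unchanged and the active set on $j$ does not grow at $t_0$, so the invariant continues to hold and $\bandwith{j}^a(t)\le 1$. The step I expect to be the main obstacle is the careful bookkeeping behind the left inequality: one must argue that the servers counted in $\bandwith{c}^m(t)$ are precisely the active migrated servers --- none dropped from the count while still active and none retained after turning \grubstate{Inactive} --- and must keep the two servers $\server{i}$ and $\server{i}'$ of the migrating task charged to different cores so that the same bandwidth is never double-counted on one core.
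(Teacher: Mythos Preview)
Your proposal is correct and follows essentially the same route as the paper's own proof. Both arguments hinge on the two-part invariant $\bandwith{c}^a(t)\le\bandwith{c}+\bandwith{c}^m(t)\le 1$: the paper states the left inequality as Equation~\eqref{eq:cap-on-active-bandwidth} and verifies the right inequality via the clamping~\eqref{eq:um}, exactly as you do. The only presentational difference is that the paper frames the argument as an inductive step (assume the bounds hold before $t_0$, check them at $t_0$ and after), whereas you phrase it as a global per-core invariant maintained across all events; your formulation is slightly cleaner and makes the role of the bookkeeping rule for $\bandwith{c}^m$ more explicit, but the substance is the same.
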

\begin{proof}
  Assume that, before migration, the active bandwidths on core $j$ and
  $j'$ are not greater than 1:
  \begin{align*}
  \forall t < t_0, \;\; \bandwith{j}^a(t) \leq 1 \;\; \wedge \;\; \bandwith{j'}^a(t) \leq 1
  \end{align*}
  This property is clearly true before the first migration: we want to
  prove that it remains true after each migration.

  The lemma is trivially true on core $j$: in fact, the original
  server of the task is not deleted: it remains initially in
  \grubstate{Active-Non-Contending} and then it may move to
  \grubstate{Inactive}: in any case, the active bandwidth on core $j$
  cannot not increase due to the migration.
  
  Regarding core $j'$, by definition:
  \begin{equation}\label{eq:cap-on-active-bandwidth}
    \forall t < t_0, \;\;  \bandwith{j'}^a(t) \leq \bandwith{j'} + \bandwith{j'}^m(t)
  \end{equation}
  i.e. the total active bandwidth does not exceed the sum of the
  bandwidth of the tasks allocated on $j'$ and the bandwidth of the
  temporary servers on $j'$. 
  
  At time $t_0$, a \emph{temporary server} is created on core $j'$
  with server utilization calculated according to
  Equation~\eqref{eq:um}. Notice that the temporary server is active
  for the duration of its lifespan, and hence its utilization is
  immediately summed to the active utilization on core $j'$. Let us
  denote by $t_0^-$ the instant before the migration takes place:
  \begin{equation}\nonumber
    \bandwith{j'}^a(t_0) =  \bandwith{j'}^a(t_0^-) + \serveru{i}^{m\prime} \leq \bandwith{j'}^a(t_0^-) + 1 - (\bandwith{j'} + \bandwith{j'}^m(t_0^-)) \leq 1,
  \end{equation}
  the last inequality is verified by substituting
  Equation~\eqref{eq:cap-on-active-bandwidth}.

  Furthermore,
  $\bandwith{j'}^m(t_0) = \bandwith{j'}^m(t_0^-) +
  \serveru{i}^{m\prime}$,
  and $\bandwith{j'} + \bandwith{j'}^m(t_0) \leq 1$. By definition of
  the GRUB algorithm,
  $\bandwith{j'}^a(t) \leq \bandwith{j'} + \bandwith{j'}^m(t)$, so the
  active bandwidth cannot exceed 1 after the migration.  

  Finally, notice that, once the migrated job has finished, the server
  is deleted only when its state becomes $\grubstate{Inactive}$: at that
  point, its bandwidth is subtracted both from $\bandwith{j'}^m(t)$
  and from $\bandwith{j'}^a(t)$, so
  $\bandwith{j'}^a(t) \leq \bandwith{j'} + \bandwith{j'}^m(t)$ even
  after the migration is over.

  Hence the lemma is proved.
\end{proof}

\begin{mytheo}
  \label{th:migration-correct}
  Let us assume a multicore platform with $m$ cores, and a set of
  servers partitioned on the $m$ cores such that:
  \[
  \forall j = 1, \ldots, m  \quad \bandwith{j} \leq 1
  \]
  Then, when scheduled by partitioned GRUB with temporary migration,
  no server misses its scheduling deadline.
\end{mytheo}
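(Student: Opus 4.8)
The plan is to reduce the multicore claim to a per-core application of Lemma~\ref{th:grub_active_utilisation}. The crucial observation is that each core $j$ runs its own independent instance of the GRUB algorithm, scheduling both its native servers and any temporary servers that have been migrated onto it, and the temporary servers are handled exactly like regular GRUB servers. Consequently Lemma~\ref{th:grub_active_utilisation} applies verbatim to core $j$: if the active bandwidth $\bandwith{j}^a(t)$ never exceeds $1$, then no server on that core---native or temporary---misses its scheduling deadline. The entire theorem therefore follows once we establish the global invariant
\[
\forall j = 1, \ldots, m, \;\; \forall t, \quad \bandwith{j}^a(t) \leq 1.
\]

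First I would prove this invariant by induction on the time-ordered sequence of migration events. For the base case, before the first migration no temporary servers exist, so on every core the active bandwidth is a sub-sum of the native allocation, $\bandwith{j}^a(t) \leq \bandwith{j} \leq 1$, by hypothesis. For the inductive step, assume the invariant holds on every core up to the instant just before the $k$-th migration. This is precisely the precondition required by Lemma~\ref{th:active-utilisation-after-migration}, which then guarantees that immediately after the migration the active bandwidth on both the source core and the destination core remains bounded by $1$; the cores not involved in the migration are untouched, so the invariant is preserved across the event on all $m$ cores.

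Next I would argue that the invariant is also maintained \emph{between} consecutive migration events, where only ordinary GRUB dynamics (job arrivals, dispatches, deadline postponements, and servers becoming inactive) occur. By construction of GRUB, the active bandwidth on core $j$ never exceeds the envelope $\bandwith{j} + \bandwith{j}^m(t)$, the sum of the native allocation and the bandwidth of the temporary servers currently residing on $j$. The migration admission test of Equation~\eqref{eq:um} is exactly what keeps $\bandwith{j} + \bandwith{j}^m(t) \leq 1$ whenever a temporary server is instantiated, and every elementary GRUB transition can only leave the active bandwidth within this envelope or decrease it. Chaining the inductive argument over all events then yields the invariant for all $t$, and applying Lemma~\ref{th:grub_active_utilisation} to each core concludes the proof.

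The main obstacle I anticipate is the bookkeeping of the induction rather than any single deep step: one must make precise the notion of a time-ordered sequence of migration events---including how to treat migrations triggered simultaneously on different cores, which can be sequentialized into an arbitrary order since each migration affects an independently-bounded pair of cores---and one must check that every GRUB transition between events respects the envelope $\bandwith{j}^a(t) \leq \bandwith{j} + \bandwith{j}^m(t) \leq 1$. Lemma~\ref{th:active-utilisation-after-migration} already discharges the only genuinely nontrivial case, namely the jump in active bandwidth caused by creating a temporary server, so the remainder is a careful but routine invariant-maintenance argument.
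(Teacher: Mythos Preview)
Your proposal is correct and follows the same approach as the paper: the paper's proof is a single line stating that the theorem ``descends from Lemma~\ref{th:active-utilisation-after-migration} and from Lemma~\ref{th:grub_active_utilisation},'' and your argument simply makes explicit the induction over migration events that is already sketched inside the proof of Lemma~\ref{th:active-utilisation-after-migration} (``we want to prove that it remains true after each migration''). The extra bookkeeping you outline---sequentializing simultaneous migrations and checking the envelope $\bandwith{j}^a(t)\leq\bandwith{j}+\bandwith{j}^m(t)$ between events---is exactly the routine detail the paper suppresses.
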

\begin{proof}
  It descends from Lemma \ref{th:active-utilisation-after-migration}
  and from Lemma \ref{th:grub_active_utilisation}.
\end{proof}

\subsection{Short server periods}
\label{sec:short-server-periods}

% \gl{TBR}

In general, job deadline $\arrival{i}{k}+\deadline{i}$ can be
different than the server deadline $\serverd{i}$. In fact, in many
cases it is useful to set the server period $\serverp{i}$ equal to a
divisor of the task's period $\period{i}$,
$\period{i} = k \serverp{i}$. For example, in the adaptive reservation
framework described in \cite{abeni2002analysis,abeni2005qos} the
authors suggest to use such technique to improve the performances of
the adaptive control mechanism.

In this case, when the virtual time $\virtualtime{i}(t_0)$ reaches the
current server deadline $\serverd{i}$, the job deadline
$\arrival{i}{k}+\deadline{i}$ can still be far in the future and a
migration may be not strictly needed: the job has still a chances to
meet its deadline without migrating. 

For the sake of simplicity in this paper we only describe the case
when the server's period is equal to the task period. However, it is
easy to modify our algorithm to account for this case by adding an
additional server parameter \textsc{MinPost} that imposes a minimum
number of deadline postponing before making the task eligible for
migration. This will be the subject of a future work.

\section{Permanent Migrations}
\label{sec:grub-partitioning}

Temporary migration is useful for reclaiming extra bandwidth. However,
it may happen that a task still suffers too many deadline misses even
with temporary migration enabled. When a task consistently misses too
many deadlines, it can be useful to permanently migrate it to a different
core, using some kind of load balancing mechanism. More specifically,
if more than a specified amount of jobs have missed their
deadlines in an interval of time of size $\mathsf{w}_i$, it may be
necessary to \emph{permanently} migrate the task in the hope to
improve its quality of service.

In contrast to temporary job migration which happens while the job is
running, the condition for permanent task migration is checked at the
end of the execution of a job. If a task is permanently
migrated, all its future jobs will start executing on the new processor.

First, we select a destination core for the task. We distinguish two
cases:
\begin{itemize}
\item The task seeks to migrate immediately, thus the migration core
  must verify the following condition:
  \begin{equation}\nonumber
    \serveru{i} \leq 1 - \bandwith{j} - \bandwith{j}^m
  \end{equation}
  If this is verified, we can immediately migrate the task with its
  server in the new core (see below).

\item If no processor guarantees the above condition, then we can try
  to delay the migration to some time in the next future. In this
  case, we first look for a processor in which:
  \[
     \serveru{i} \leq 1 - \bandwith{j}
  \]
  Then, we disable incoming migrations on the destination core, and we
  wait for $\bandwith{j}^m \leq 1 - (\serveru{i} + \bandwith{j})$. We
  have to wait at most for the longest response time of any migrating
  job: this depends on the job execution time and cannot be easily
  bounded a-priori. Therefore, we additionally start a timer, and if
  the migration bandwidth has not decreased enough within the timeout,
  we abort the load balancing operation.

\item If none of the conditions above is verified, permanent task
  migration is not possible.
\end{itemize}

If permanent migration is possible, the task and the corresponding
server are migrated on the destination core $j'$ with their own
variables, and $\bandwith{j'}$ is incremented. We can decrease the
$\bandwith{j}$ on the original core only once the original server
becomes inactive (i.e. the server virtual time $\virtualtime{i} =
t$). This can be easily achieved by setting an appropriate timer.

If multiple target cores are available as destination for permanent
migrations, some ``classical'' heuristics such as Best Fit (BF),
First Fit (FF) and Worst Fit (WF) can be used to select the destination
core.
The same heuristics can be used for the {\em allocation strategy} presented
in the next section, and will be used for the simulations reported in
Section~\ref{sec:results-discussions}.

The proof of correctness of this mechanism is similar to the proof of
Corollary \ref{th:migration-correct}, and it is not reported here because of
space constraints.

\subsection{Allocation strategy}
\label{sec:allocation-strategy}

When a new task $\task{n}$ enters the system, the allocation strategy
used to schedule it on the most suitable core is similar to the strategy
used for load balancing. We first try to
immediately allocate the task on one of the cores were
$1-(\bandwith{j} + \bandwith{j}^m) \geq \serveru{n}$. If no core meets
the condition, we select one of the cores where
$1-\bandwith{j} \geq \serveru{n}$, we disable migration onto the
selected core and we wait for the migrating utilization to decrease.
If more than one possible target core is found, we select one according
to one of the heuristics mentioned above (BF, FF, or WF).
When a task leaves the system, its servers utilization is subtracted
from the processor utilization only when the server becomes
\grubstate{Inactive}.

If none of the cores guarantees that
$1-\bandwith{j} \geq \serveru{n}$, or if the maximum timeout for
decreasing the migrating utilization expires, the incoming task is
rejected.
In this case, the user may decide to % retry inserting
assign the task a smaller $\serveru{n}$, of course losing in quality
of service, or to kill some other running task to make place for the
new one.

%%% Local Variables:
%%% mode: latex
%%% TeX-master: "../grub"
%%% End:

\section{Experimental Evaluation}
\label{sec:results-discussions}

The performance of the proposed solution have been evaluated
through an extensive set of simulations, and the most interesting results
are reported in this section.
The simulations have been performed by using a discrete time simulator
built in SCALA that adopts a functional programming paradigm.
The simulator supports both partitioned and global scheduling and can simulate
both sporadic and periodic tasks. It implements the GRUB reclaiming algorithm
in all its known incarnations: single-core~\cite{lipari2000greedy},
partitioned (as proposed in this paper, using various heuristics for
load balancing and tasks allocation)
and global~\cite{abeni2016multicore} (either with sequential or parallel
reclaiming). Global GRUB with sequential and parallel reclaiming will
be referred as ``G-Seq'' and ``G-Par'' in the following.

Thanks to the usage of the functional programming paradigm and to its modular
design, the simulator code can be easily extended with new reclaiming policies.

The simulator will be soon available online at
\url{https://github.com/zahoussem/reclaiming_simulator}

\subsection{Experimental Setup}

Each simulation scenario consists of randomly generated task
sets. First we fix the number of tasks to generate to a fixed number
$n$ and the number of cores $m$, and we set a target total utilization
level. Then we generate a set of $n$ servers' utilization
$\{ \serveru{i} \}$ by using the UUNIFAST-discard algorithm
\cite{emberson2010techniques}.

Then, we generate the execution times of the jobs. Execution times are
generated according to a certain probability distribution between a
minimum and a maximum value. First we randomly select a minimum
execution time $\mathsf{minexec}$ and a maximum execution time
$\mathsf{maxexec}$ for each task as random samples from a random
variable $\mathsf{Unif(5, 200)}$ with uniform distribution. Then a
sequence of job execution time is generated from a random variable
between $[\mathsf{minexec}, \mathsf{maxexec}]$. In this paper we
considered two different distributions: the two-level uniform
distribution and the Weibull distribution.

Two-level uniform distribution takes as an additional parameter a
desired budget $B$ and a probability $\mathsf{PM}$ (set to $0.75$ in
the experiments). Then, the execution time is chosen with an uniform
distribution between $[\mathsf{minexec}, B]$ with probability
$\mathsf{PM}$, and with an uniform distribution between
$[B+1, \mathsf{maxexec}]$ with probability $1 - \mathsf{PM}$. In this
way we can precisely and easily control the number of jobs with
execution time larger than $B$.  However, the two-level uniform
distribution does not represent realistic workloads.

The Weibull distribution is known to be a good model of the execution
time of many real-time workloads~\cite{Cucu2012} because it
effectively represents the statistical behavior of typical execution
time profiles with long tails. The Weibull probability density
function can be expressed as:
\[
   f(x) = \frac{k}{\lambda} \left(\frac{x-\theta}{\lambda}\right)^{k-1} e^{-\left(\frac{x-\theta}{\lambda}\right)^{k}}
\]
where $k$ is the \emph{shape} parameter, $\lambda$ is the \emph{scale}
parameter, and $\theta$ is the location (or displacement) parameter.

Once execution times are generated, the budget $B$ for the task is
selected so to ensure that approximately a certain percentage
$\mathsf{PM}$ of jobs will have execution time greater than $B$. In
both cases, the task period and the server period are computed as
$B / \serveru{i}$. For every server, the migrating utilization has
been set to $0.1$.%\gl{Check this: still true?}

% Figure \ref{fig:sample} shows a population of execution times
% generated using the Weibull distribution with parameters
% $\eta = 1$ and $\beta=0.95$.

% \begin{figure}[tb]
%   \centering
%   \begin{tikzpicture}[scale=0.5]
%     \begin{axis}
%       [
% 	axis x line=bottom,
% 	axis y line = left,
% 	xlabel={Job number},
% 	ylabel={Execution time (Weibull)},
% 	xmin = -2,
% 	%xmax = 1400.0,
% 	ymin = 25,
%         ymax =60,
% 	legend style={at={(0.5,1)}},
%       ]      
%       \addplot table[only marks,mark=square,mark options={scale=2, fill=white},text mark as node=true]{figs/tmp_1.exec};  
%     \end{axis}
%   \end{tikzpicture}
%   \caption{A sample of the generated execution time using the Weibull distribution.}\label{fig:sample}
% \end{figure}

In the simulation experiments that we performed, we did not observe
statistically relevant differences between the two-levels uniform
distribution and the Weibull distribution for the migration ratio and
the deadline miss rate ratio. Therefore, in this paper we discuss only
the results obtained using the two-level distribution.

Once the set of server is generated, we test schedulability and
allocation. In particular, we test that the set can be partitioned
using FF, BF and WF; and we test the schedulability of the servers
with the schedulability tests for global EDF described in
\cite{abeni2016multicore}. If any of the allocation algorithms fails,
or any of the scheduling tests fails, we discard the generated task
set.

In the rest of this section, the results of simulations are presented
and discussed. We compare the performances of different partitioning
heuristics and the parallel and sequential reclaiming techniques
described in \cite{abeni2016multicore} according to the number of
deadlines missed and the number of job and task
migrations. % First, we
% evaluate the performances of job migration against parallel and
% sequential reclaiming. Later, we compare the partitioned schemes
% against each other with and without load balancing activated. We
% conclude the section by presenting two dynamic execution scenarios in
% which a task joins the system at a certain point in time to show how
% our algorithm reacts in an open-system.

\subsection{Impact of Temporary Migrations}
\label{sec:comp-job-migr}

To evaluate the effects of job migration, we set the number of cores
to $m=4$ and the number of tasks to $n= 25$. The total utilization
ranges between $[0.5, 3]$. In fact, it is very difficult to generate
task sets with high total utilization that are schedulable by all the
scheduling methods considered in this comparison. For each utilization
level, $100$ scenarios have been generated. For our partitioned
reclaiming scheme, in the figures presented in this section job
migration is enabled and load balancing is disabled.

\begin{figure}[tb]
  \centering
 \begin{tikzpicture}[scale=0.8]
    \begin{axis}
      [ axis x line=bottom, axis y line = left, xlabel={Total
        Bandwidth}, ylabel={average ratio of migrations}, ymin = 0,
      legend entries={"G-Seq","G-Par","FF","BF","WF"}, legend
      style={at={(1,0.7)}}, ]

      \addplot table[y index=1,x index=0]{figs/mig_job.txt};
      \addplot table[y index=2,x index=0]{figs/mig_job.txt};
      \addplot table[y index=3,x index=0]{figs/mig_job.txt};
      \addplot table[y index=5,x index=0]{figs/mig_job.txt};
      \addplot table[y index=4,x index=0]{figs/mig_job.txt};
    \end{axis}
  \end{tikzpicture}
  \caption{Migrations per job as a function of the system load.}
  \label{fig:res:ad}
\end{figure}
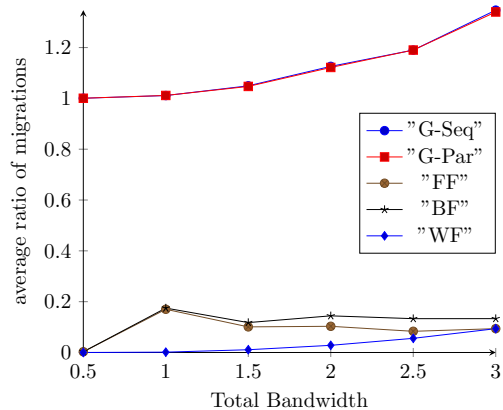

Figure \ref{fig:res:ad} shows the average number of migrations
per job as a function of total
utilization. The figure compares the results of global parallel
reclaiming (G-Par), global sequential reclaiming (G-Seq), and our
reclaiming scheme with Worst-Fit (WF), First-Fit (FF) and Best-Fit
(BF). As expected, the number of migrations in global algorithms G-Par
and G-Seq is much larger compared to partitioned schemes.

Notice that for utilization $0.5$, the reclaiming algorithm is able to
respect the tasks constraints without postponing the scheduling deadlines;
hence, the partitioned algorithms do not produce any migration. When the load
increases, BF and FF exhibit a different behavior compared to  WF: 
FF and BF tend to allocate the maximum number of tasks on the least
possible number of cores, so this may lead to cores that are very
packed while other cores are less loaded. In contrast, WF tends to
spread the workload evenly across cores. Thus, in BF and FF many jobs
have less opportunities to reclaim the local unused bandwidth and are
more eligible to migrate compared to WF.

For example, when the utilization is $1.0$, FF and BF allocate all the
jobs on the first core, leaving all the other cores completely
unloaded. Hence, all the tasks will migrate to reclaim CPU time from
the unloaded cores. This is why for utilization $1.0$ BF and FF show a
large increment in the number of migrations per job.  When the load
increases, BF and FF start to assign tasks to the second core (then to
the third and to the fourth). Since the number of tasks is fixed, when
increasing the total utilization, the average utilization of every
single task also increases, so BF and FF tend to distribute the load
more uniformly. As a result, the number of migrations per job
decreases and for utilization $3.0$ the numbers of migrations for BF,
FF and WF are similar.

\begin{figure}[tb]
  \centering
 \begin{tikzpicture}[scale=0.8]
    \begin{axis}
      [ axis x line=bottom, axis y line = left, xlabel={Total
        Bandwidth}, ylabel={average ratio of missed deadlines},
	ymin = 0,
	legend entries={"G-Seq","G-Par","FF","BF","WF"},
	legend style={at={(0.5,0.9)}},
yticklabel style={
        /pgf/number format/fixed,
        /pgf/number format/precision=5
},
scaled y ticks=false,
      ]
      \addplot table[y index=4,x index=0]{figs/miss_job.txt};
      \addplot table[y index=5,x index=0]{figs/miss_job.txt};
       \addplot table[y index=1,x index=0]{figs/miss_job.txt};
      \addplot table[y index=3,x index=0]{figs/miss_job.txt};
     
      \addplot table[y index=2,x index=0]{figs/miss_job.txt};
    \end{axis}
  \end{tikzpicture}
  \caption{Average missed deadlines ratio.}
  \label{fig:res:miss_ad}
\end{figure}
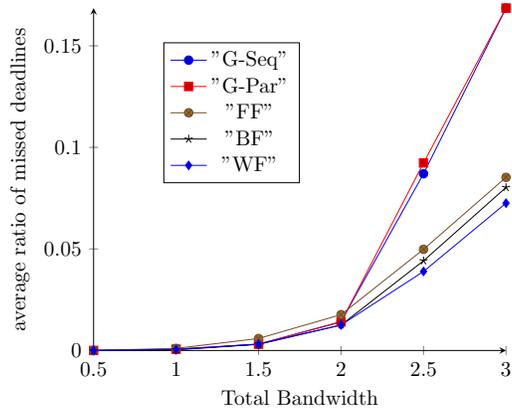

For the same simulation experiment, Figure \ref{fig:res:miss_ad} shows
the average ratio between number of deadline misses and total number
of jobs as a function of total utilization.

At low total utilization, all scheduling techniques allow all tasks to
respect their deadlines. The higher the total bandwidth, the higher is
the probability of missing the deadline.  However, partitioned schemes
present significantly better performance.
% This means
% that partitioned reclaiming is in general more effective than global
% reclaiming (sequential or parallel).
% I Commented this sentence because I fear it is dangerous: maybe the
% experiment just shows that partitioned scheduling works better than
% global scheduling?
%
While all of the algorithms show
almost zero deadline miss for utilizations up to $1.0$, the partitioned
schedulers cause much less migrations (in particular, WF causes $0$ migrations
up to utilization $1.0$).
This shows that with a proper partitioning the local reclaiming rule of the GRUB algorithm
can reclaim enough bandwidth for all jobs to complete
within their deadlines.
Moreover, when the utilization grows above $2.0$ the number of deadline
misses with the two global schedulers is much higher than with the
partitioned schedulers.
% FIXME: Check if the previous sentence is convincing, and if this is what you
% wanted to say!

When comparing partitioned scheme against each other, again WF shows a
slightly better behavior, because it can more effectively balance the
load across all processor.

The confidence interval for partitioned techniques is less than
$\pm 0.0025$ and less than $\pm 0.0047$ for the global
techniques. Similar values of the confidence interval has been noticed
for all figures presented in this section.

\subsection{Impact of Permanent Migrations}
\label{sec:load-balancing}

In this section, we evaluate the performances of our heuristics when
both job migration and load balancing are activated. Load balancing is
useful only in partitioned schemes, so in this section we only compare
the three partitioning heuristic against each other and against WF
with load balancing disabled.
% Since the results
% are better when task migration is activated and that this property
% (task migration) is dedicated to partitioned heuristics, we compare
% only the partitioning schemes against each other. 
The total utilization is varied between 0.5 to 3.5 in steps of
$0.5$. % The simulations presented in this section are different from
% the ones of the previous section.

\begin{figure}[tb]
  \centering
 \begin{tikzpicture}[scale=0.8]
    \begin{axis}
      [ axis x line=bottom, axis y line = left, xlabel={Total
        Bandwidth}, ylabel={average deadline miss ratio},
	ymin = 0,
	legend entries={"WF-a","BF","FF","WF"},
	legend style={at={(0.5,0.9)}},
yticklabel style={
        /pgf/number format/fixed,
        /pgf/number format/precision=5
},
scaled y ticks=false,
      ]
      \addplot table[y index=2,x index=0]{figs/miss_job_aa.txt};
      \addplot table[y index=1,x index=0]{figs/miss_job_aa.txt};

      \addplot table[y index=4,x index=0]{figs/miss_job_aa.txt};
      \addplot table[y index=3,x index=0]{figs/miss_job_aa.txt};

    \end{axis}
  \end{tikzpicture}
  \caption{Average deadline miss ratio when load balancing is active
    ($w = 20$ jobs).}
  \label{fig:res:miss_aa}
\end{figure}
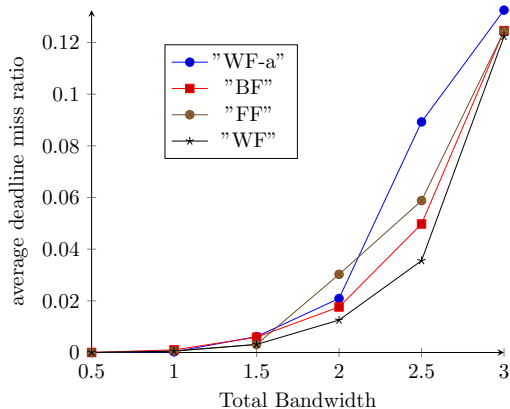

Figure \ref{fig:res:miss_aa} presents the average deadline miss ratio
as a function of total bandwidth. The line labeled WF-a presents the
results when task migration is disabled. In all others, the task
migration is activated, monitoring the deadline miss ratio on time
intervals of size equal to $20$ jobs. As you can see, load balancing
helps to reduce the number of deadline misses for all partitioning
techniques, and once again WF is the best heuristic. When the load is
very high, all cores are heavily loaded and it is more difficult to
apply load balancing, so all heuristics present the same number of
migrations.

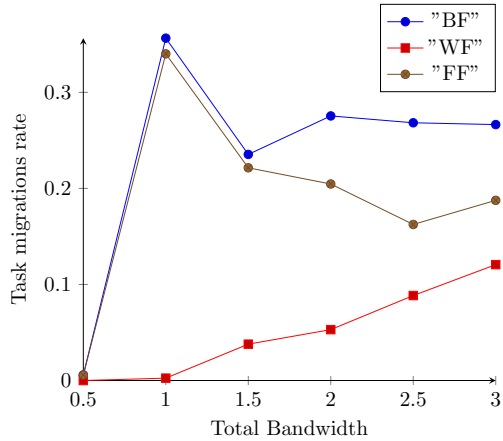
\begin{figure}[tb]
  \centering
  \begin{tikzpicture}[scale=0.8]
    \begin{axis}
      [
	axis x line=bottom,
	axis y line = left,
	xlabel={Total Bandwidth},
	ylabel={Task migrations rate},
	ymin = 0,
	legend entries={"BF","WF","FF"},
	legend style={at={(1,1.1)}},
      ]
      
      \addplot table[y index=3,x index=0]{figs/load.txt};
      \addplot table[y index=1,x index=0]{figs/load.txt};
      \addplot table[y index=2,x index=0]{figs/load.txt};   
    \end{axis}
  \end{tikzpicture}
  \caption{Migrations per job when load balancing is
    active.}\label{fig:load}
\end{figure}

Figure~\ref{fig:load} shows the average number of migrations per job for the same set of
experiments and it confirms the results of the previous figures. In
particular, FF and BF show a bad behavior even at low workloads,
because they tend to concentrate all load on a few processors, and
hence reduce the possibility of local reclaiming, whereas WF ensures
an already good load balancing in the initial allocation.

The number of migrations is greater when total bandwidth is between
1.5 and 2.5.  This is due to the fact that at low utilization, all
deadlines are respected. At high utilizations, all core are highly
loaded and task migrations are not possible, hence all heuristics
behave in the same way. At average load, tasks miss their deadline,
and they have the possibility to move from a core to another one,
hence the number of task migrations number is higher.

\subsection{Dynamic Workloads}
\label{sec:scen-task-insert}

In this subsection we investigate the dynamic behavior of the load
balancing algorithm.  The scenario is generated for a platform with 2
cores (marked with a blue and a pink line, respectively). In the
presented scenario, the task set has a total utilization of $1.2$. The
initial allocation is done using Worst Fit heuristics.  A new task
enters the system at time $t=2000$ and leave it at time $t=6000$. The
new task has a high processor demands. The server utilization for the
new task is set to $0.3$.

\begin{figure}[tb]
  \centering
  \includegraphics[width=0.49\columnwidth]{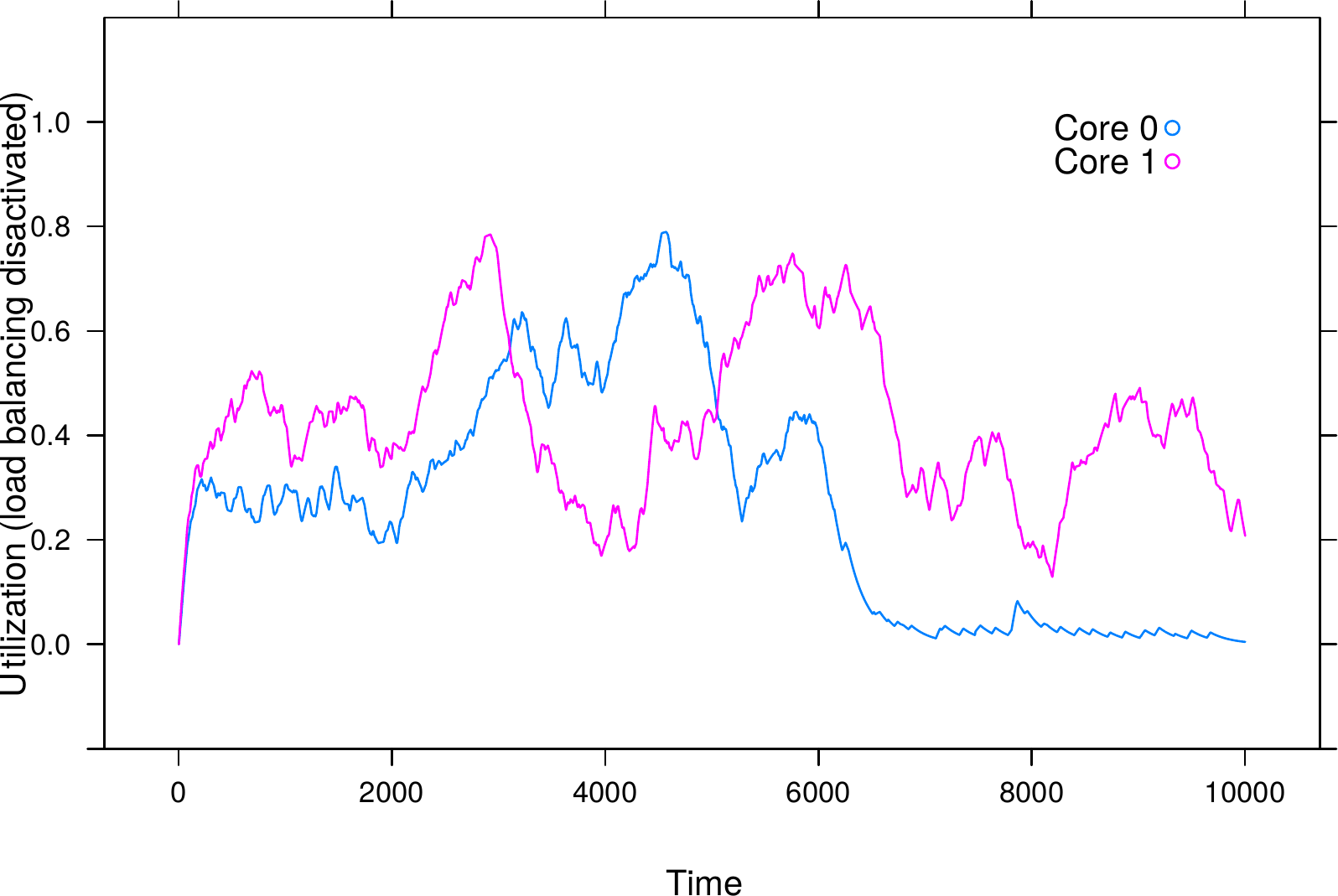}
  \includegraphics[width=0.49\columnwidth]{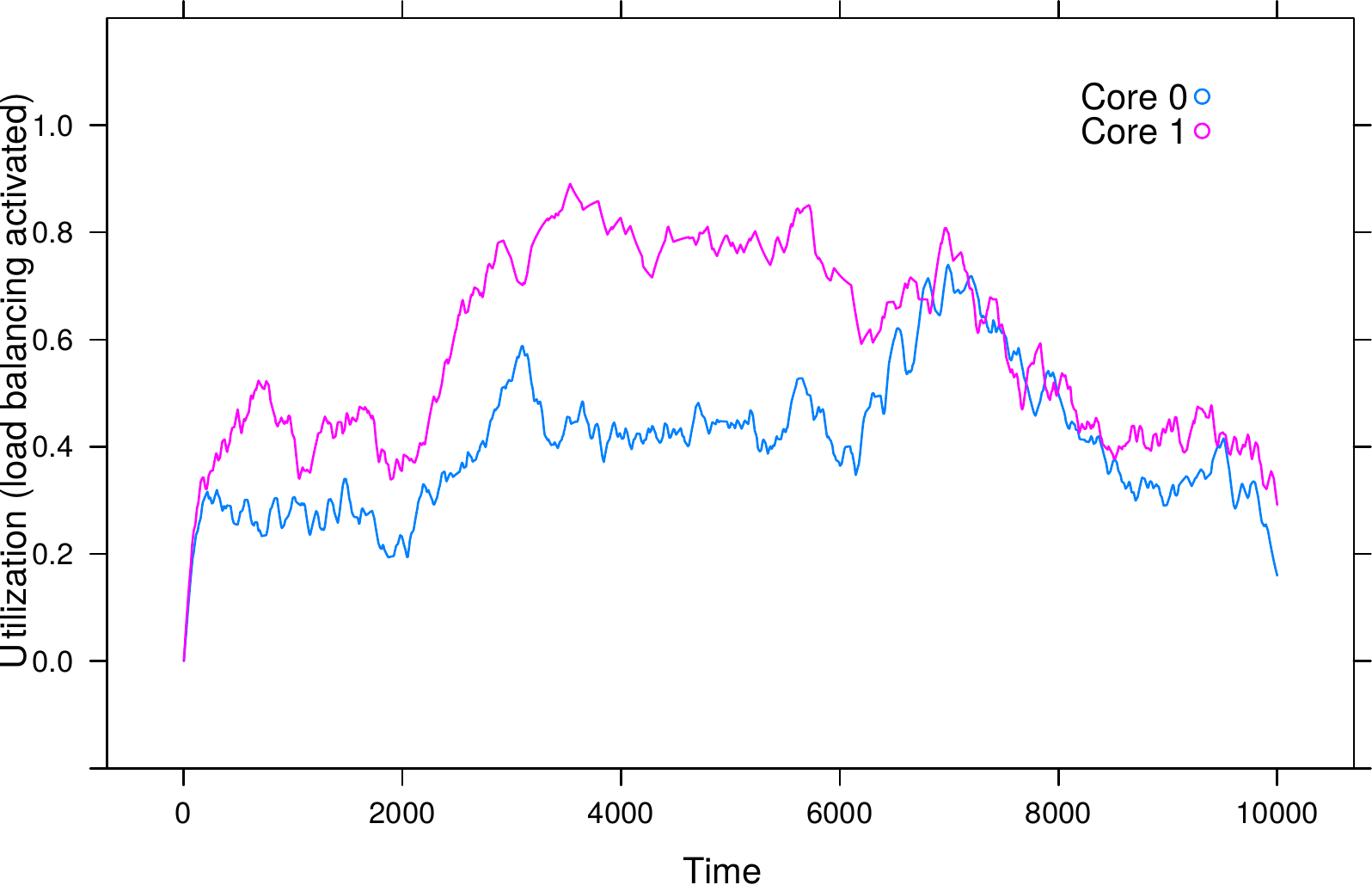}
  \caption{Active utilization when a new task is inserted at $t=2000$, and leaves at $t=6000$.}\label{fig:inserting_load}
\end{figure}

Figure \ref{fig:inserting_load} presents active utilization for every
core, with load balancing enabled (left) and disabled (right). For
clarity of presentation, in the plots we show the exponential average
of the active utilization with ratio $1/200$.

In the case the load balancing is activated, at the beginning tasks
migrate between the two cores thanks to load balancing, and the load
is evenly distributed between cores (left plot). This does not happen
in the right scenario where load balancing is disabled, so the pink
core remains more loaded than the blue one. At $t = 2000$, when the
new task enters the system, in the left scenario the load of the pink
core is increased by an amount equal to the utilization of the new
task, and remains high during the interval $[2000,
6000]$. % and the load still high
% witnessing that the task requires mores than its allocated processor
% time and seeks to migrate to the other cores.
On the other hand, when task migration is activated, cores start
exchanging tasks in order to miss less deadlines. % Later, the system
% comes back to its initiale state and the active bandwidth is reduced
% after $t=6000$.

\begin{figure}[tb]
  \centering
  \includegraphics[width=0.49\columnwidth]{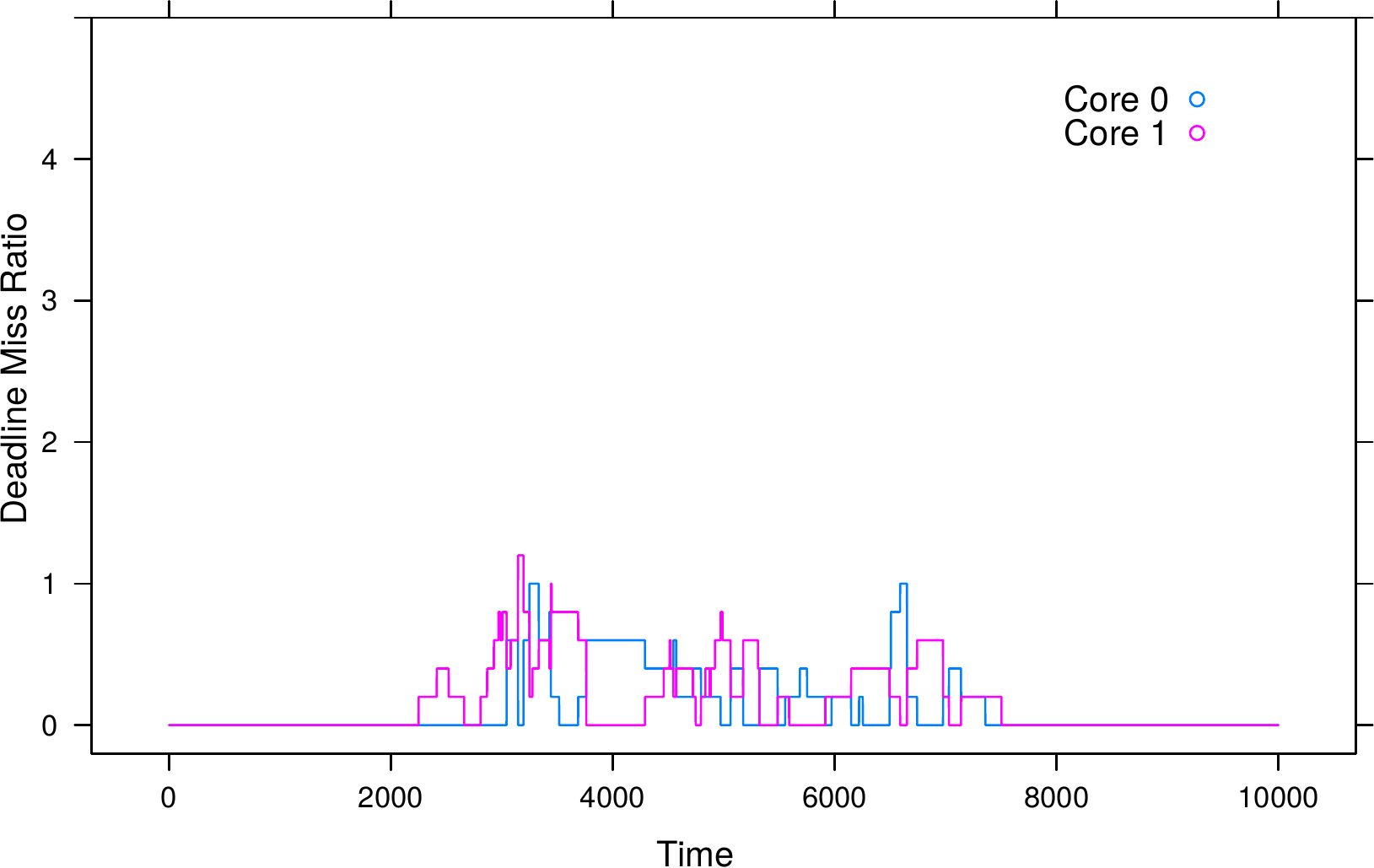}
  \includegraphics[width=0.49\columnwidth]{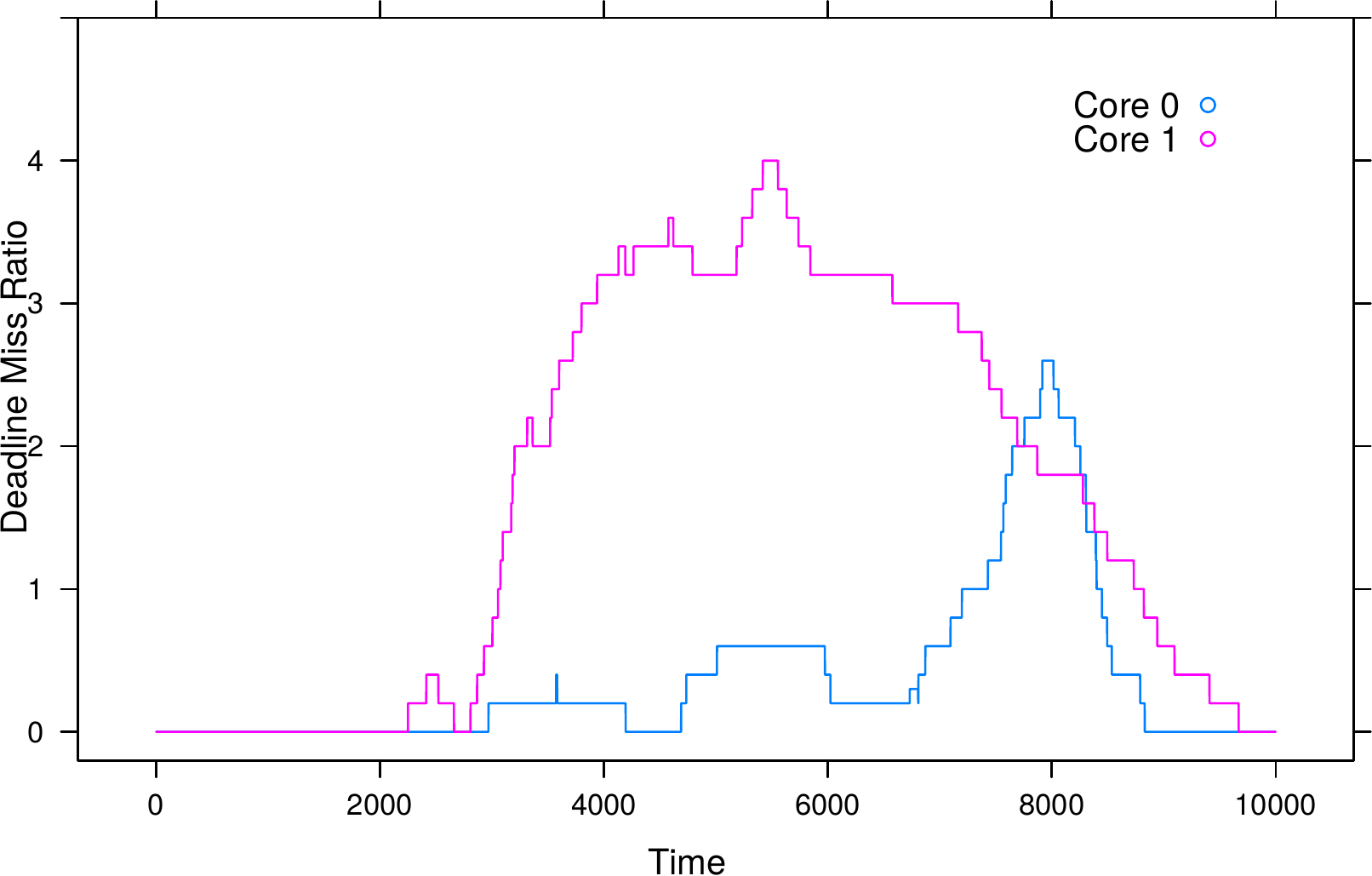}
  \caption{Deadline miss ration on each core for the task inserting scenario}\label{fig:inserting_qos}
\end{figure}

In Figure \ref{fig:inserting_qos}, we present the deadline miss ratio
(fraction of jobs missing their deadline in an interval of time of
size $w$ starting at time $t$) when the load balancing is enabled
(left) and disabled (right) at each instant of time $t$. Before task
inserting, both cores have a deadline miss ratio equals to $0$. When
the task arrives, the deadline miss ratio increases on both
cores. When the task migration is disabled, the pink core has a higher
load and the tasks that are allocated there miss more deadlines
compared to the blue core which has smaller load.  On the other hand,
when load balancing is activated, tasks can permanently migrate from
the blue core to the pink one and vice-versa. Again, the load is
spread more evenly, and this explains the fact that we have a smaller
deadline miss ratio compared to the one where the task migration is
disabled. This does not come for free: the tasks on the blue core miss
slightly more deadlines compared to the case of disabled migration
(figure on the right).  However, as the total bandwidth is balanced in
the activated task migration scenario, the gain on the blue core is
quantitatively higher than the loss on the pink
core. % because the active bandwith is less, each server has
%the ``suitable'' reclaiming it looks for.

%%% Local Variables:
%%% mode: latex
%%% TeX-master: "../grub"
%%% End:

\section{Conclusion}
\label{sec:conclusion}

We presented a partitioning scheduling algorithms with temporary
migration for soft real-time tasks. When a task exhausts its budget, it 
first reclaims the unused bandwidth on the local core, and migrates
only if it is necessary to reclaim extra bandwidth on the other cores.
Simulation experiments show that our technique permits to greatly
reduce the number of migrations with respect to global scheduling
without increasing the number of deadline misses. The Worst-Fit
heuristic seems to be the most effective for balancing the load across
cores and to distribute the extra bandwidth. 

As a future work, we plan to implement our algorithm on Linux to
evaluate the overhead of the scheduler and the complexity of the
temporary migration mechanism. We also plan to extend our technique to
heterogeneous multicore processors, such as the ARM big-Little, and
complement it with DVFS and power management schemes, in order to
reduce the energy consumption of modern mobile appliances.

% In this paper, we present a partitioned version of grub algorithm that
% allows job migration. We had shown through a large set of experiments
% that Worst Fit algorithm is promising and can give very good
% performances in terms of reducing the number of deadline miss and
% migrations compared to global scheduling and reclaiming scheme.

\section{Acknowledgement}
This work has been supported in part by the Institut
de recherche sur les composants logiciels et matriels pour
l information et la communication avancee de Lille, IRCICA,
UMR3380.

\bibliographystyle{unsrt}
\bibliography{mybib}

\begin{thebibliography}{10}

\bibitem{Lelli16}
Juri Lelli, Claudio Scordino, Luca Abeni, and Dario Faggioli.
\newblock Deadline scheduling in the linux kernel.
\newblock {\em Software: Practice and Experience}, 46(6):821--839, Jun 2016.

\bibitem{Gujarati2015}
Arpan Gujarati and Bj{\"o}rn~B. Brandenburg.
\newblock Multiprocessor real-time scheduling with arbitrary processor
  affinities: from practice to theory.
\newblock {\em Real-Time Systems}, 51(4):440--483, 2015.

\bibitem{BBB16-semipartitioning}
B.~B. Brandenburg and M.~Gul.
\newblock Global scheduling not required: Simple, near-optimal multiprocessor
  real-time scheduling with semi-partitioned reservations.
\newblock In {\em Proceedings of the IEEE Real-Time Systems Symposium (RTSS)},
  pages 99--110, Nov 2016.

\bibitem{Alessandro-ecrts}
Daniel Casini, Alessandro Biondi, and Giorgio Buttazzo.
\newblock Semi-partitioned scheduling of dynamic real-time workload: A
  practical approach based on analysis-driven load balancing.
\newblock In {\em To be presented at the 29th Euromicro Conference on Real-Time
  Systems (ECRTS17)}, June 2017.

\bibitem{abeni1998integrating}
Luca Abeni and Giorgio Buttazzo.
\newblock Integrating multimedia applications in hard real-time systems.
\newblock In {\em Real-Time Systems Symposium, 1998. Proceedings. The 19th
  IEEE}, pages 4--13. IEEE, 1998.

\bibitem{Mer94}
C.~W. Mercer, S.~Savage, and H.~Tokuda.
\newblock Processor capacity reserves: operating system support for multimedia
  applications.
\newblock In {\em 1994 Proceedings of IEEE International Conference on
  Multimedia Computing and Systems}, pages 90--99, May 1994.

\bibitem{abeni2016multicore}
Luca Abeni, Giuseppe Lipari, Andrea Parri, and Youcheng Sun.
\newblock Multicore cpu reclaiming: parallel or sequential?
\newblock In {\em Proceedings of the 31st Annual ACM Symposium on Applied
  Computing}, pages 1877--1884. ACM, 2016.

\bibitem{lipari2000greedy}
Giuseppe Lipari and Sanjoy Baruah.
\newblock Greedy reclamation of unused bandwidth in constant-bandwidth servers.
\newblock In {\em Real-Time Systems, 2000. Euromicro RTS 2000. 12th Euromicro
  Conference on}, pages 193--200. IEEE, 2000.

\bibitem{pellizzoni2008m}
Rodolfo Pellizzoni and Marco Caccamo.
\newblock M-cash: A real-time resource reclaiming algorithm for multiprocessor
  platforms.
\newblock {\em Real-Time Systems}, 40(1):117--147, 2008.

\bibitem{regnier2011run}
Paul Regnier, George Lima, Ernesto Massa, Greg Levin, and Scott Brandt.
\newblock Run: Optimal multiprocessor real-time scheduling via reduction to
  uniprocessor.
\newblock In {\em Real-Time Systems Symposium (RTSS), 2011 IEEE 32nd}, pages
  104--115. IEEE, 2011.

\bibitem{Massa2014}
E.~Massa, G.~Lima, P.~Regnier, G.~Levin, and S.~Brandt.
\newblock Optimal and adaptive multiprocessor real-time scheduling: The
  quasi-partitioning approach.
\newblock In {\em 2014 26th Euromicro Conference on Real-Time Systems}, pages
  291--300, July 2014.

\bibitem{Caccamo05}
M.~Caccamo, G.~C. Buttazzo, and D.~C. Thomas.
\newblock Efficient reclaiming in reservation-based real-time systems with
  variable execution times.
\newblock {\em IEEE Transactions on Computers}, 54(2):198--213, Feb 2005.

\bibitem{Devi05}
U.~M.~C. Devi and J.~H. Anderson.
\newblock Tardiness bounds under global edf scheduling on a multiprocessor.
\newblock In {\em 26th IEEE International Real-Time Systems Symposium
  (RTSS'05)}, pages 12 pp.--341, Dec 2005.

\bibitem{Val05}
P.~Valente and G.~Lipari.
\newblock An upper bound to the lateness of soft real-time tasks scheduled by
  edf on multiprocessors.
\newblock In {\em 26th IEEE International Real-Time Systems Symposium
  (RTSS'05)}, pages 10 pp.--320, Dec 2005.

\bibitem{Bra07}
B.~B. Brandenburg and J.~H. Anderson.
\newblock Integrating hard/soft real-time tasks and best-effort jobs on
  multiprocessors.
\newblock In {\em 19th Euromicro Conference on Real-Time Systems (ECRTS'07)},
  pages 61--70, July 2007.

\bibitem{And14}
J.~H. Anderson, J.~P. Erickson, U.~C. Devi, and B.~N. Casses.
\newblock Optimal semi-partitioned scheduling in soft real-time systems.
\newblock In {\em 2014 IEEE 20th International Conference on Embedded and
  Real-Time Computing Systems and Applications}, pages 1--10, Aug 2014.

\bibitem{lin2005improving}
Caixue Lin and Scott~A Brandt.
\newblock Improving soft real-time performance through better slack reclaiming.
\newblock In {\em Real-Time Systems Symposium, 2005. RTSS 2005. 26th IEEE
  International}, pages 12--pp. IEEE, 2005.

\bibitem{lip}
Giuseppe Lipari.
\newblock {\em Resource reservation in real-time systems}.
\newblock PhD thesis, PhD thesis, Scuola Superiore S. Anna, Pisa, Italy, 2000.

\bibitem{journals/tc/ScordinoL06}
Claudio Scordino and Giuseppe Lipari.
\newblock {A Resource Reservation Algorithm for Power-Aware Scheduling of
  Periodic and Aperiodic Real-Time Tasks}.
\newblock {\em IEEE Transactions on Computers}, 55(12):1509--1522, 2006.

\bibitem{abeni2002analysis}
Luca Abeni, Luigi Palopoli, Giuseppe Lipari, and John Walpole.
\newblock {Analysis of a reservation-based feedback scheduler}.
\newblock In {\em {Proc. 23rd IEEE Real-Time Systems Symp. RTSS 2002}}, pages
  71--80. IEEE Computer Society, 2002.

\bibitem{abeni2005qos}
Luca Abeni, Tommaso Cucinotta, Giuseppe Lipari, Luca Marzario, and Luigi
  Palopoli.
\newblock {QoS management through adaptive reservations}.
\newblock {\em Real-Time Systems}, 29(2):131--155, 2005.

\bibitem{emberson2010techniques}
Paul Emberson, Roger Stafford, and Robert~I Davis.
\newblock Techniques for the synthesis of multiprocessor tasksets.
\newblock In {\em proceedings 1st International Workshop on Analysis Tools and
  Methodologies for Embedded and Real-time Systems (WATERS 2010)}, pages 6--11,
  2010.

\bibitem{Cucu2012}
L.~Cucu-Grosjean, L.~Santinelli, M.~Houston, C.~Lo, T.~Vardanega, L.~Kosmidis,
  J.~Abella, E.~Mezzetti, E.~Quiñones, and F.~J. Cazorla.
\newblock Measurement-based probabilistic timing analysis for multi-path
  programs.
\newblock In {\em 2012 24th Euromicro Conference on Real-Time Systems}, pages
  91--101, July 2012.

\end{thebibliography}

\end{document}